\newtheorem{theorem}{Theorem}[section]
\newtheorem{corollary}[theorem]{Corollary}
\newtheorem{lemma}[theorem]{Lemma}
\author[Bellitto, Duffy and MacGillivray]{Thomas Bellitto\affiliationmark{1}\thanks{This work was initiated while the author was a postdocoral fellow at Department of Mathematics and Computer Science, University of Southern Denmark (Odense, Denmark) and supported by the Independent Research Fund Denmark under grant number DFF 7014-00037B}
  \and Christopher Duffy\affiliationmark{2}
  \and Gary MacGillivray\affiliationmark{3}}
\title[Homomorphically Full Oriented Graphs]{Homomorphically Full Oriented Graphs}
\affiliation{
  LIP6, CNRS, Sorbonne Universit\'e, Paris, France\\
  School of Mathematics and Statistics, University of Melbourne, Melbourne, Australia\\
  Department of Mathematics and Statistics,  University of Victoria, Victoria, Canada}
\keywords{directed graph homomorphism, complexity}
\begin{document}

\publicationdata{vol. 25:2}{2023}{13}{10.46298/dmtcs.9957}{2022-08-24; 2022-08-24; 2023-08-08}{2023-08-11}

\maketitle
\begin{abstract}
Homomorphically full graphs are those for which every homomorphic image is isomorphic to a subgraph. 
We extend the definition of homomorphically full to oriented graphs in two different ways.
For the first of these, we show that homomorphically full oriented graphs arise as quasi-transitive orientations of homomorphically full graphs. 
This in turn yields an efficient recognition and construction algorithms for these homomorphically full oriented graphs.
For the second one, we show that the related recognition problem is GI-hard, and that the problem of deciding if a graph admits a homomorphically full orientation is NP-complete.
In doing so we show the problem of deciding if two given oriented cliques are isomorphic is GI-complete.
\end{abstract}

\section{Introduction and Background}
Oriented graphs can be considered to arise in one of two ways: 
as a special class of digraphs defined by a restriction on the existence of directed 2-cycles, 
or  from simple graphs by assigning a direction to each edge.
The distinction between these perspectives has consequences in the study of homomorphisms of oriented graphs.
For example, each leads to a different definition of vertex colouring of oriented graphs (see the work of \cite{FHM03} and  \cite{S16}).
We use the term \emph{antisymmetric digraph} to specifically refer to digraphs with the restriction that there are no 
directed 2-cycles, and the term \emph{oriented graph} otherwise, including when the distinction between perspectives is unimportant.

For standard graph theoretic notation we refer the reader to the seminal text  by \cite{B08}.
We generally use Greek capital letters to refer to graphs and Latin capital letters to refer to antisymmetric digraphs and oriented graphs.

A \emph{homomorphism} of a graph $\Gamma$ to a graph $\Lambda$ is a function $\phi: V(\Gamma) \to V(\Lambda)$ such that
$xy \in E(\Gamma)$ implies $\phi(x)\phi(y) \in E(\Lambda)$.  
When $\phi$ is a homomorphism of $\Gamma$ to $\Lambda$ we write $\phi: \Gamma \to \Lambda$.
When the existence of a homomorphism (rather than a particular homomorphism) is of interest we write $\Gamma \to \Lambda$.
The definition of a homomorphism of a digraph $G$ to a digraph $H$ is identical.

A homomorphism of $\Gamma$ to $\Lambda$ induces a mapping of $A(\Gamma)$ to $A(\Lambda)$.
A homomorphism $\phi: \Gamma \to \Lambda$ is called \emph{complete} when both $\phi$ and the induced mapping of $A(\Gamma)$ to $A(\Lambda)$ are surjective.
If there is a complete homomorphism of $\Gamma$ to $\Lambda$, then $\Lambda$ is called a \emph{homomorphic image} of $\Gamma$.  
The corresponding definitions are the same for digraphs.

If $\Lambda$ is a subgraph of $\Gamma$, then a homomorphism $\phi$ of $\Gamma$ to $\Lambda$ is called a \emph{retraction} when $\phi(h) = h$ for every vertex $h \in V(\Lambda)$.  
If there is a retraction of $\Gamma$ to $\Lambda$, then $\Lambda$ is called a \emph{retract} of $\Gamma$. 
A retract of $\Gamma$ is necessarily an induced subgraph of $\Gamma$, but the converse is false.
The corresponding definitions for digraphs are identical.

Vertices $x$ and $y$ of a graph $G$ are called \emph{neighbourhood comparable} when $N(x) \subseteq N(y)$ or $N(y) \subseteq N(x)$.
Notice that if $N(x) \subseteq N(y)$, then there is a retraction of $G$ to $G - x$.
A graph $\Gamma$ is called \emph{homomorphically full} when every homomorphic image 
of $\Gamma$ is isomorphic to a subgraph of $\Gamma$.
The homomorphically full graphs were first characterized by \cite{B96}. 
\begin{theorem}\label{thm:main} \cite{B96}
	Let $\Gamma$ be a graph.
	The following statements are equivalent:
	\begin{enumerate}[(a)]
		\item$\Gamma$ is homomorphically full.
		\item If $x$ and $y$ are non-adjacent vertices of $\Gamma$, then $x$ and $y$ are neighbourhood comparable
		\item Every homomorphic image of $\Gamma$ is isomorphic to a retract of $\Gamma$.
		\item Every homomorphic image of $\Gamma$ is isomorphic to an induced subgraph of $\Gamma$.
		\item $\Gamma$ contains neither $2K_2$ nor $P_4$ as an induced subgraph. 
		\item $\overline{\Gamma}$ is the comparability graph of an up-branching.
	\end{enumerate}
\end{theorem}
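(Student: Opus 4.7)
The plan is to establish the cyclic chain $(b) \Rightarrow (c) \Rightarrow (d) \Rightarrow (a) \Rightarrow (e) \Rightarrow (b)$ together with the separate equivalence $(e) \Leftrightarrow (f)$. The implications $(c) \Rightarrow (d) \Rightarrow (a)$ come for free, since every retract is an induced subgraph and every induced subgraph is a subgraph.

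For $(b) \Leftrightarrow (e)$ I would run a short forbidden-subgraph argument. If $(b)$ fails at non-adjacent vertices $x, y$, pick $u \in N(x) \setminus N(y)$ and $v \in N(y) \setminus N(x)$; one checks that $u \neq v$ and $u, v \notin \{x, y\}$, and then $\{x, u, v, y\}$ induces $2K_2$ if $uv \notin E(\Gamma)$ and $P_4$ if $uv \in E(\Gamma)$. Conversely, an induced $2K_2$ with edges $pq, rs$ makes $p, r$ a non-adjacent pair with $q \in N(p) \setminus N(r)$ and $s \in N(r) \setminus N(p)$, and an induced $P_4$ on $p$-$q$-$r$-$s$ yields the pair $p, s$ similarly. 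For $(e) \Leftrightarrow (f)$ I would invoke the classical characterisation of trivially perfect graphs as the $(P_4, C_4)$-free comparability graphs of up-branchings, applied to $\overline{\Gamma}$ together with $\overline{P_4} = P_4$ and $\overline{C_4} = 2K_2$.

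For $(b) \Rightarrow (c)$ I would induct on $|V(\Gamma)|$. Given a complete homomorphism $\phi \colon \Gamma \to \Lambda$ that is not injective on vertices, pick distinct $x, y$ with $\phi(x) = \phi(y)$; they must be non-adjacent since $\Lambda$ has no loops, so $(b)$ gives, say, $N(x) \subseteq N(y)$, producing a fold $\rho \colon \Gamma \to \Gamma - x$ sending $x \mapsto y$ and fixing the rest. Property $(b)$ is hereditary (removing a vertex only shrinks each neighbourhood by a common element, preserving comparability) and $\phi|_{\Gamma - x}$ remains a complete homomorphism onto $\Lambda$, so induction yields a retraction of $\Gamma - x$ onto a copy of $\Lambda$; composing with $\rho$ finishes the step.

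The main obstacle is $(a) \Rightarrow (e)$, which I would prove by contrapositive. For an induced $P_4$ on $a$-$b$-$c$-$d$, the homomorphism $\Gamma \to \Gamma/(a \sim d)$ collapses the non-adjacent endpoints, and the identified vertex $z$ inherits both $b$ and $c$ as neighbours so that $\{z, b, c\}$ forms a triangle; for an induced $2K_2$ with edges $ab, cd$, the map $\Gamma \to \Gamma/(a \sim c)$ makes the new vertex adjacent to both $b$ and $d$. The delicate step is ruling out any subgraph embedding $\iota \colon \Gamma' \hookrightarrow \Gamma$ of the resulting quotient; my plan is to track a local invariant at the identified vertex, namely that the new vertex has degree $|N(a) \cup N(d)|$ (or the analogue for $2K_2$) together with the forced adjacency pattern on the image of $\{b, c\}$, and then argue case-by-case that no vertex of $\Gamma$ can play the role of $\iota(z)$ without contradicting the induced-$P_4$ or induced-$2K_2$ hypothesis. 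This local analysis is the only part requiring genuine work; the rest of the proof is bookkeeping once $(b) \Leftrightarrow (e)$ is available.
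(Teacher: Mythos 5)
The paper does not prove this theorem; it is imported from \cite{B96}, so your proposal can only be judged on its own terms. Most of it is sound: the chain $(b)\Rightarrow(c)\Rightarrow(d)\Rightarrow(a)$ is correct (the induction for $(b)\Rightarrow(c)$ works because neighbourhood comparability is hereditary and the restricted homomorphism stays complete), the four-vertex analysis for $(b)\Leftrightarrow(e)$ is complete, and delegating $(e)\Leftrightarrow(f)$ to the trivially-perfect-graph characterisation is legitimate.

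The genuine gap is $(a)\Rightarrow(e)$, which is the entire hard content of the theorem, and your sketch does not close it. You commit to showing that for an \emph{arbitrary} induced $P_4$ on $a$-$b$-$c$-$d$ the specific image $\Gamma_{ad}$ fails to embed as a subgraph of $\Gamma$, and you propose to do this by a ``local invariant'' at the merged vertex $z$, namely its degree $|N(a)\cup N(d)|$ and the triangle on $\{z,b,c\}$. A subgraph embedding $\iota\colon\Gamma_{ad}\hookrightarrow\Gamma$ is under no obligation to send $z$ anywhere near the original $P_4$, to preserve degrees, or to respect non-adjacencies; the only local constraint is that $\iota(z)$ has degree at least $|N(a)\cup N(d)|$ and lies in a triangle, and a graph can easily contain such vertices far from the chosen $P_4$. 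Ruling out every candidate therefore forces you to propagate constraints globally (edge counts: $|E(\Gamma_{ad})|=|E(\Gamma)|-|N(a)\cap N(d)|$ versus $|E(\Gamma-s)|=|E(\Gamma)|-\deg(s)$, degree sequences, and typically an \emph{extremal} choice of the violating pair, e.g.\ one maximising $|N(x)\cup N(y)|$ among non-adjacent incomparable pairs), which is exactly the argument you have not supplied. Note also that the theorem's truth only guarantees that \emph{some} homomorphic image is not a subgraph, not that the one you construct from a fixed induced $P_4$ or $2K_2$ is; the paper's own Figure~\ref{fig:counter} illustrates (in the oriented setting) how a merged image can re-embed as a subgraph in a way that moves every vertex, which is precisely the phenomenon a local analysis at $z$ cannot exclude. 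Until this direction is proved, the cycle of implications is not closed and statement $(a)$ is not connected to the others.
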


Homomorphic images of simple graphs are implicitly understood to be simple graphs.
If a homomorphic image of a simple graph could be a graph with loops, then there would be no homomorphically 
full simple graphs: a single vertex with a loop would be a homomorphic image of every graph.

Analogous to the definition of graphs, we say an oriented graph or antisymmetric digraph $G$ is \emph{homomorphically full} when every homomorphic image of $G$ is isomorphic to a subgraph of $G$.
The two perspectives on how an oriented graph $G$  arises are germane in understanding its homomorphic images and the meaning of this definition.
When $G$ is an antisymmetric digraph, a \emph{homomorphic image} of $G$ is a 
digraph $H$ (which may have directed 2-cycles) for which there is a complete homomorphism
$G \to H$.
When $G$ is an oriented graph, a \emph{homomorphic image} of $G$ is an oriented graph 
$H$ for which there is a complete homomorphism
$G \to H$.
In this case, two vertices joined by a directed path of length 2  (a \emph{$2$-dipath}) must map to different
vertices of $H$.

For a graph $\Gamma$ and non-adjacent vertices $u,v\in V(\Gamma)$, let $\Gamma_{uv}$ denote the graph produced by identifying $u$ and $v$ into a single vertex named $u_v$.
The homomorphism $\phi: \Gamma \to \Gamma_{uv}$ that  sends $u$ and $v$ to $u_v$ and fixes all other vertices is a complete homomorphism.
As such, $\Gamma_{uv}$ is a homomorphic image of $\Gamma$.
We call such a homomorphism \emph{elementary}.
Every homomorphism can be expressed as a composition of elementary homomorphisms.
These statements remain true when considered for oriented graphs and antisymmetric digraphs.
However for oriented graphs we are restricted to considering pairs of vertices  that are neither adjacent nor the ends of a $2$-dipath.
In any case, to check whether a graph, oriented graph, or antisymmetric digraph is homomorphically full it is enough to check that the target of each elementary homomorphism is isomorphic to a subgraph.

Let $H$ be the directed path on three vertices.
With Theorem \ref{thm:main} one can verify that the underlying graph (i.e., the path on three vertices) is homomorphically full.
Interpreting $H$ as an antisymmetric digraph, there is an elementary homomorphism that identifies the two ends of the directed path.
The resulting homomorphic image, the directed $2$-cycle, is not isomorphic to any subgraph of $H$ and so $H$ is not homomorphically full.
On the other hand, interpreting $H$ as an oriented graph, the vertices at the end of the path cannot be identified by an elementary homomorphism. 
In this case the only homomorphic image of $H$ is $H$ itself.
And so we conclude  $H$ is homomorphically full.

Let $G$ be a directed graph. We say that an ordered pair of non-adjacent vertices, $u,v\in V(G)$ are \emph{neighbourhood comparable} when $N^+(u) \subseteq N^+(v)$  and $N^-(u) \subseteq N^-(v)$ or $N^+(v) \subseteq N^+(u)$  and $N^-(v) \subseteq N^-(u)$.
As with graphs, we observe that if $u$ and $v$ are neighbourhood comparable, then,  we have  $G-u \cong G_{uv}$, presuming $N^+(u) \subseteq N^+(v)$  and $N^-(u) \subseteq N^-(v)$.
The implied isomorphism  here is, in some sense, trivial.
Each vertex other than $v$ maps to itself  and $v$ maps to $u_v$.
This occurs as there is a retraction $G \to G_{uv}$  where $u_v$ in $G_{uv}$ is relabelled as $v$.

An oriented graph or antisymmetric digraph $G$ is homomorphically full if and only if for all $u$ and $v$ that can be identified by an elementary homomorphism we have that $G_{uv}$ is isomorphic to a subgraph of $G$.
And so, similar to the case for graphs, if for an oriented graph or antisymmetric digraph $G$ every pair of vertices that can be identified by an elementary homomorphism is neighbourhood comparable, then $G$ is necessarily homomorphically full.
We will see that the converse of this statement holds for homomorphically full antisymmetric digraphs but not for homomorphically full oriented graphs.

Our remaining work proceeds as follows.
In Section \ref{sec:antiSym} we fully classify homomorphically full antisymmetric digraphs as those that are quasi-transitive and whose underlying graph is homomorphically full.
In doing so we provide a theorem for homomorphically full antisymmetric digraphs analogous to Theorem \ref{thm:main}.
These results imply that homomorphically full antisymmetric digraphs can be identified in polynomial time.
And also that one may decide in polynomial time if a graph is an underlying graph of a homomorphically full antisymmetric digraph.
In Section \ref{sec:oriented} we show for oriented graphs that neighbourhood comparability does not fully characterize  homomorphic fullness.
This work leads us to study, in Section \ref{sec:isFull}, the problem of deciding if an oriented graph is homomorphically full.
Though the analogous problems for graphs and antisymmetric digraphs are Polynomial, we show the problem to be GI-hard for oriented graphs.
We continue studying the complexity of problems related to homomorphic fullness of oriented graphs in Section \ref{sec:orientations}.
We find that deciding if a graph admits a homomorphically full orientation is NP-complete.
We conclude in Section \ref{sec:conclude} with discussion and further remarks.

Since the presence or absence of multiple edges in $\Gamma$ or $\Lambda$ does not matter in the definition of a 
homomorphism of $\Gamma$ to $\Lambda$, we consider only graphs  in which there are no multiple edges.
By contrast, the presence of loops matters. 
A homomorphism of a graph can map adjacent vertices to a vertex with a loop.
These statements remain true when we replace graphs with oriented graphs (under either interpretation).
To simplify matters, herein we assume that all graphs and oriented graphs are irreflexive.
We comment more on the nature of this problem for graphs that may have loops as part of our further remarks in Section \ref{sec:conclude}.

\section{Homomorphically Full Antisymmetric Digraphs}\label{sec:antiSym}

We begin our study of homomorphically full antisymmetric digraphs by examining their underlying graphs.

\begin{lemma}\label{lem:U(G)full}	
	If $G$ is a homomorphically full antisymmetric digraph, then its underlying graph $U(G)$ is homomorphically full.
\end{lemma}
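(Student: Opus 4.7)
The plan is to lift any complete homomorphism of $U(G)$ to a complete homomorphism of $G$, apply the hypothesis that $G$ is homomorphically full, and then project back down.

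First I would take an arbitrary homomorphic image $\Lambda$ of $U(G)$, witnessed by a complete homomorphism $\phi\colon U(G)\to\Lambda$. I would then construct a digraph $H$ by setting $V(H)=V(\Lambda)$ and adding, for every arc $xy\in A(G)$, the arc $\phi(x)\phi(y)$. Note that $\phi(x)\neq\phi(y)$ whenever $xy$ is an arc of $G$, because $xy$ is then an edge of $U(G)$ and $\Lambda$ is irreflexive; so each such arc is well-defined. The digraph $H$ need not be antisymmetric, but that is allowed since a homomorphic image of an antisymmetric digraph is permitted to be an arbitrary digraph.

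Next I would verify that $\phi\colon G\to H$ is a complete homomorphism of digraphs. That $\phi$ is a homomorphism and that the induced map on arcs is surjective are immediate from the construction of $H$. Surjectivity of $\phi$ on vertices follows from the fact that $\phi\colon U(G)\to\Lambda$ is complete, so every vertex of $V(H)=V(\Lambda)$ is hit. Therefore $H$ is a homomorphic image of the antisymmetric digraph $G$, and by the hypothesis that $G$ is homomorphically full, $H$ is isomorphic to a subgraph of $G$.

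Finally I would project back: any isomorphism from $H$ onto a subgraph of $G$ induces an isomorphism from $U(H)$ onto a subgraph of $U(G)$. A short check shows $U(H)=\Lambda$: every edge $ab$ of $\Lambda$ lifts to an edge $xy$ of $U(G)$, hence to an arc of $G$, hence to an arc of $H$ between $a$ and $b$; conversely every arc of $H$ was placed there because of an arc of $G$, so its endpoints lie on an edge of $\Lambda$. Thus $\Lambda$ is isomorphic to a subgraph of $U(G)$, which proves that $U(G)$ is homomorphically full. The only real subtlety is ensuring the lift $H$ is genuinely a homomorphic image of $G$ as an antisymmetric digraph (including that $\phi$ is vertex-surjective and identifies no arc-endpoints), but this follows cleanly from the completeness of $\phi$ on $U(G)$ and the irreflexivity of $\Lambda$.
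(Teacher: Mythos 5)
Your proof is correct, but it takes a different route from the paper's. The paper argues entirely at the level of elementary homomorphisms: it observes that $G$ and $U(G)$ admit exactly the same elementary identifications (a pair is identifiable in the antisymmetric digraph iff it is non-adjacent, i.e.\ iff it is identifiable in $U(G)$), that $U(G_{uv}) = U(G)_{uv}$, and that a digraph isomorphism onto a subgraph of $G$ descends to a graph isomorphism onto a subgraph of $U(G)$; it then invokes the general principle, stated in the introduction, that homomorphic fullness need only be checked on elementary images. You instead work with an \emph{arbitrary} complete homomorphism $\phi\colon U(G)\to\Lambda$, lift it to a complete digraph homomorphism $G\to H$ by pushing each arc of $G$ forward, use fullness of $G$ to embed $H$ into $G$, and verify $U(H)=\Lambda$ to project the embedding back down. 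Your verification of the two non-obvious points (that $H$ is irreflexive because $\Lambda$ is, and that $U(H)=\Lambda$ because every edge of $\Lambda$ is hit by an edge of $U(G)$ and hence by an arc of $G$) is sound. What your approach buys is self-containment: you never appeal to the reduction to elementary homomorphisms, so the argument stands on its own. What the paper's approach buys is brevity, since that reduction is already part of its standing framework and is reused throughout.
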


\begin{proof}	
	Let $G$ be a homomorphically full antisymmetric digraph.
	Notice that each of $G$ and $U(G)$ have the same set of elementary homomorphisms.
	For $uv \notin A(G)$, if $G_{uv}$ is isomorphic to a subgraph of $G$ then $U(G_{uv})$ is necessarily isomorphic to a subgraph of $U(G)$.
	As $G$ is homomorphically full, for each $uv \notin A(G)$ we have that $G_{uv}$ is isomorphic to a subgraph of $G$.
	Therefore for each $uv \notin E(U(G))$ the graph $U(G)_{uv}$ is isomorphic to a subgraph of $U(G)$.
	Thus $U(G)$ is homomorphically full.
\end{proof}

We note that the converse of this lemma is false.  
As discussed in our introductory remarks, an undirected path on three vertices is homomorphically full, but the directed path on three vertices is not a  homomorphically full antisymmetric digraph.

Recall that a directed graph is \emph{quasi-transitive} when for each vertex $w$, there is complete adjacency between the in-neighbours and the out-neighbours of $w$.
Quasi-transitive directed graphs were first studied and subsequently fully classified by \cite{GH62}.
A graph is an underlying graph of a quasi-transitive digraph if and only if it is a comparability graph.
Further, every comparability graph is the underlying graph of some quasi-transitive digraph.

\begin{theorem}\label{thm:Defn1Char}
	An antisymmetric digraph is homomorphically full if and only if it is quasi-transitive and its underlying graph is homomorphically full.
	Further, every homomorphically full graph is the underlying graph of a homomorphically full antisymmetric digraph
\end{theorem}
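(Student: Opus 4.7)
The plan is to prove the two implications of the equivalence separately and then derive the concluding construction by combining the sufficiency direction with Theorem \ref{thm:main}.

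For the forward direction I would use Lemma \ref{lem:U(G)full} to reduce matters to showing that a homomorphically full antisymmetric digraph $G$ is quasi-transitive. Arguing by contradiction, suppose there is a $2$-dipath $u \to w \to v$ with $u$ and $v$ non-adjacent in $G$. As stressed in the introductory discussion, for antisymmetric digraphs the homomorphic image is permitted to contain $2$-cycles, so the elementary homomorphism identifying $u$ with $v$ is legitimate as soon as $uv, vu \notin A(G)$. The resulting $G_{uv}$ contains the directed $2$-cycle on $\{u_v, w\}$, which no subgraph of the antisymmetric $G$ can realise, contradicting homomorphic fullness.

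For the reverse direction the crux is to lift the undirected neighbourhood comparability guaranteed by Theorem \ref{thm:main} to directed neighbourhood comparability, using quasi-transitivity as the bridge. Fix non-adjacent $u,v \in V(G)$ with, say, $N_{U(G)}(u) \subseteq N_{U(G)}(v)$. To show $N^+(u) \subseteq N^+(v)$, pick $x \in N^+(u)$; then $x$ is adjacent to $v$ in $U(G)$, and if the arc were $xv$ rather than $vx$ then we would have $u \in N^-(x)$ and $v \in N^+(x)$, forcing $u$ and $v$ to be adjacent by quasi-transitivity at $x$, a contradiction. The symmetric argument handles $N^-(u) \subseteq N^-(v)$. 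By the observation following the definition of directed neighbourhood comparability, this yields $G_{uv} \cong G-u$, a subgraph of $G$; since every elementary homomorphism of an antisymmetric digraph identifies a non-adjacent pair, $G$ is homomorphically full.

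For the concluding claim, a homomorphically full graph $\Gamma$ is $P_4$-free by Theorem \ref{thm:main}(e), hence a cograph, and cographs are well known to be comparability graphs, so $\Gamma$ admits a transitive orientation $G$ (which is automatically quasi-transitive). Since $U(G) = \Gamma$ is homomorphically full, the reverse direction just proved yields that $G$ is a homomorphically full antisymmetric digraph with underlying graph $\Gamma$. The main obstacle is the reverse direction: one must recognise that quasi-transitivity is exactly what converts undirected neighbourhood containment into directed containment, and that it interacts cleanly with the non-adjacency of $u$ and $v$ so that no arc between them is forced. Once this lifting step is in hand, the other two parts are short consequences of Theorem \ref{thm:main} and a standard fact about cographs.
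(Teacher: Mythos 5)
Your proposal is correct and follows essentially the same route as the paper's proof: non-quasi-transitivity yields an elementary homomorphism whose image contains a directed $2$-cycle, quasi-transitivity lifts the undirected neighbourhood containment from Theorem \ref{thm:main} to directed containment so that $G_{uv} \cong G - u$, and the final claim follows from $P_4$-freeness, the cograph-to-comparability-graph fact, and the result of Ghouila-Houri. The only difference is that you spell out the lifting step (why $x \in N^+(u)$ forces $vx$ rather than $xv$), which the paper asserts without detail.
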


\begin{proof}
	Let $G$ be an antisymmetric directed graph whose underlying graph is homomorphically full.
	
	If $G$ is not quasi-transitive, then there exists a vertex $w$ such that there is not complete adjacency between the in-neighbours and the out-neighbours of $w$.
	That is, there exists $u,v \in V(G)$ such that $uwv$ is an directed path of length $2$ (a $2$-dipath).
	The elementary homomorphism that identifies $u$ and $v$ is complete.
	As such $G_{uv}$ is a homomorphic image of $G$.
	However $G_{uv}$ is  not isomorphic to any subgraph of $G$; it contains a directed $2$-cycle.
	Therefore $G$ is not a homomorphically full antisymmetric digraph.
	
	Assume now that $G$ is quasi-transitive.
	Consider a pair of non-adjacent vertices $u,v \in V(G)$.
	Since $U(G)$ is homomorphically full, we have, without loss of generality, that $N(u) \subseteq N(v)$.
	Since $G$ is quasi-transitive, it follows that $N^+(u) \subseteq N^+(v)$  and $N^-(u) \subseteq N^-(v)$.
	Therefore $G_{uv} \cong G - u$.
	That is, $G_{uv}$ is isomorphic to a subgraph of $G$.
	Therefore $G$ is a homomorphically full antisymmetric digraph.

	Let $\Gamma$ be a homomorphically full  graph.
	By Theorem \ref{thm:main}, $\Gamma$ has no induced $P_4$, and hence is a cograph.
	Every cograph is a comparability graph.
	As proven by  \cite{GH62}, every comparability graph is the underlying graph of a quasi-transitive digraph.
\end{proof}

As with graphs, homomorphically full antisymmetric digraphs can be classified using neighbourhood comparability, homomorphic images and retracts.
We note, however, that such a classification does not exist for oriented graphs.
We explore this further in Section \ref{sec:oriented}.
\begin{theorem}\label{thm:Defn1Main}
	Let $G$ be an antisymmetric digraph.
	The following statements are equivalent:
	\begin{enumerate}[(a)]
		\item $G$ is a homomorphically full antisymmetric digraph.
		\item $G$ is  quasi-transitive and its underlying graph is homomorphically full.
		\item Every pair of non-adjacent vertices of $G$ are neighbourhood comparable.
		\item Every homomorphic image of $G$ is isomorphic to a retract of $G$.
		\item Every homomorphic image of $G$ is isomorphic to an induced subgraph of $G$.
	\end{enumerate}
\end{theorem}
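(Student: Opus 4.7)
The plan is to establish the equivalences by combining Theorem \ref{thm:Defn1Char} with a direct analysis of the elementary homomorphisms available in a quasi-transitive antisymmetric digraph. A convenient cycle is (a) $\Leftrightarrow$ (b), (b) $\Leftrightarrow$ (c), and then (a) $\Rightarrow$ (d) $\Rightarrow$ (e) $\Rightarrow$ (a).

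First I would dispose of (a) $\Leftrightarrow$ (b), which is exactly Theorem \ref{thm:Defn1Char}. Next I would establish (b) $\Leftrightarrow$ (c). For (b) $\Rightarrow$ (c), let $u,v$ be non-adjacent. By Theorem \ref{thm:main} applied to $U(G)$, we may assume $N(u) \subseteq N(v)$; then quasi-transitivity forces each arc incident to $u$ to have a matching arc of the same orientation incident to $v$, so $N^+(u)\subseteq N^+(v)$ and $N^-(u)\subseteq N^-(v)$. For (c) $\Rightarrow$ (b), comparability of total neighbourhoods in $U(G)$ follows immediately from comparability of the signed neighbourhoods, giving condition (b) of Theorem \ref{thm:main} for $U(G)$. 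Quasi-transitivity is the only delicate point: a $2$-dipath $uwv$ makes $u$ and $v$ non-adjacent (antisymmetry forbids $uv$ and $vu$ simultaneously, and if either existed there would be no $2$-dipath through $w$ in an irreflexive antisymmetric digraph -- one checks both cases). Neighbourhood comparability then forces either $w \in N^+(v) \cap N^-(v)$ or $w \in N^+(u)\cap N^-(u)$, both contradicting antisymmetry.

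For (a) $\Rightarrow$ (d), let $H$ be a homomorphic image of $G$ and express the witnessing complete homomorphism as a composition of elementary homomorphisms $G = G_0 \to G_1 \to \cdots \to G_k = H$, each identifying a pair of non-adjacent vertices. Since (a) implies (c), and since condition (c) is preserved by taking induced subgraphs, it is preserved at every step (once I verify each $G_i$ is isomorphic to an induced subgraph of $G_{i-1}$). For the $i$-th step, the identified pair $u,v$ is neighbourhood comparable in $G_{i-1}$, so as noted in the paragraph preceding the theorem, the elementary homomorphism is a retraction of $G_{i-1}$ onto $G_{i-1}-u$. Composing these retractions gives a retraction of $G$ onto an induced subgraph isomorphic to $H$; the composition is a retraction because each factor fixes the vertex set of the final induced subgraph pointwise.

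Finally (d) $\Rightarrow$ (e) holds because every retract is an induced subgraph, and (e) $\Rightarrow$ (a) is immediate from the definition of homomorphically full. The main obstacle is the verification that condition (c) is preserved under each elementary step so that the chain of retractions can be iterated; once that invariant is secured, the composition argument goes through cleanly.
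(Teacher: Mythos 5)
Your overall route is essentially the paper's: (a)$\Leftrightarrow$(b) is exactly Theorem \ref{thm:Defn1Char}, your (b)$\Rightarrow$(c) is argued just as in the paper via Theorem \ref{thm:main} and quasi-transitivity, and your (a)$\Rightarrow$(d)$\Rightarrow$(e)$\Rightarrow$(a) chain is the paper's (c)$\Rightarrow$(d)$\Rightarrow$(e)$\Rightarrow$(a) with the composition-of-retractions step (which the paper compresses into ``it suffices to assume $\phi$ is elementary'') spelled out more carefully; your observation that neighbourhood comparability is inherited by $G-u$ is precisely the invariant that justifies that compression.

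The one genuine problem is in your direct proof of (c)$\Rightarrow$(b), which your decomposition relies on (the paper gets this implication for free by closing the single cycle through (d), (e) and (a)). The parenthetical claim that a $2$-dipath $uwv$ forces $u$ and $v$ to be non-adjacent is false: in the transitive triangle with arcs $uw$, $wv$, $uv$ there is a $2$-dipath from $u$ to $v$ through $w$ and yet $uv$ is an arc; neither irreflexivity nor antisymmetry rules this out. Taken literally, your argument would conclude that $G$ has no $2$-dipaths at all, which fails for any transitive tournament on three or more vertices. The repair is immediate: quasi-transitivity only requires that the ends of every $2$-dipath be adjacent, so argue by contradiction --- if $uwv$ is a $2$-dipath with $u$ and $v$ non-adjacent, then (c) makes them neighbourhood comparable, and your computation (that $w \in N^+(v)\cap N^-(v)$ or $w\in N^+(u)\cap N^-(u)$) contradicts antisymmetry. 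With that one correction the proof is complete.
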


\begin{proof}
	Let $G$ be a homomorphically full antisymmetric digraph.
	
	Assume $G$ is a homomorphically full antisymmetric digraph. 
	By Lemma \ref{lem:U(G)full}, we have that $U(G)$ is homomorphically full.
	By Theorem \ref{thm:Defn1Char} it follows that $G$ is quasi-transitive.
	That is, $(a) \Rightarrow (b)$.
	
	Assume $G$ is quasi-transitive and its underlying graph is homomorphically full.
	By Theorem \ref{thm:main} it follows that every pair of non-adjacent vertices of $U(G)$ is neighbourhood comparable.
	Since $G$ is quasi-transitive, it follows that $G$ has no induced $2$-dipath.
	Therefore every pair of non-adjacent vertices of $G$ are neighbourhood comparable.
	That is, $(b) \Rightarrow (c)$.
	
	Assume every pair of non-adjacent vertices of $G$ are neighbourhood comparable.
	Let $H$ be a homomorphic image of $G$.
	Let $\phi: G \to H$ be a complete homomorphism.
	Since $\phi$ is complete it suffices to assume it is an elementary homomorphism.
	Since $\phi$ is an elementary homomorphism we have that $H \cong G_{uv}$ for some pair of non-adjacent vertices $u$ and $v$.
	Since $u$ and $v$ are neighbourhood comparable, it follows without loss of generality that $G_{uv} \cong G - u$.
	Thus there exists a retraction $G \to G - u$.
	And so it follows that every homomorphic image of $G$ is isomorphic to a retract of $G$.
	That is, $(c) \Rightarrow (d)$.
	By definition, every retract of $G$ is isomorphic to an induced subgraph of $G$.
	Thus $(d) \Rightarrow (e)$.

	Finally, assume every homomorphic image of $G$ is isomorphic to an induced subgraph of $G$.
	From the definition of homomorphically full, it follows that $G$ is homomorphically full.
	That is, $(e) \Rightarrow (a)$.
	This completes the proof.
\end{proof}

Let $G$ be a homomorphically full  antisymmetric digraph and let $u$ and $v$ be a pair of non-adjacent vertices.
By Theorem \ref{thm:Defn1Main} we can assume, without loss of generality, that $G_{uv} \cong G - u$.
In some sense, this isomorphism is trivial -- vertices other than $u$ and $v$ may be mapped to themselves.
Vertex $u_v$ may be mapped to $v$.
By changing the label of $u_v$ to $v$ we arrive at a homomorphic image that is a subgraph of $G$.
The statement of Theorem \ref{thm:Defn1Main} suggests that for antisymmetric digraphs we may remove the word \emph{isomorphic} from within the definition of homomorphically full.
The same observation holds for homomorphically full graphs.
And in fact, the word \emph{isomorphism} does not appear in the original definition of homomorphically full given by \cite{B96}. 
In the following section, we will see that the existence of these trivial isomorphisms is not guaranteed for homomorphically full oriented graphs.

\section{Homomorphically Full Oriented Graphs}\label{sec:oriented}
We turn now to the interpretation of oriented graphs as arising from simple graphs and the subsequent definition of homomorphically full.
Consider the oriented graphs in Figure \ref{fig:counter}.
Since $v$ and $v^\prime$ are neighbourhood comparable we have  $ G-v^\prime \cong G_{vv^\prime}$.
Though $u$ and $u^\prime$ are not neighbourhood comparable, one can observe $G - v^\prime - xu^\prime \cong G_{uu^\prime}$.
One can construct such an isomorphism by first noting that the sole vertex of degree $2$ in  $G-v^\prime - xu^\prime$ must map to the sole vertex of degree $2$ in $G_{uu^\prime}$.
As these are the only elementary homomorphisms of $G$, we conclude $G$ is homomorphically full.
The example shows that a theorem for homomorphically full oriented graphs akin to Theorems \ref{thm:main} and \ref{thm:Defn1Main} is not possible.
That is, this example shows that the property of homomorphic fullness is not necessarily related to neighbourhood comparability, nor is it necessarily related to the existence of retracts or homomorphisms to induced subgraphs.

\begin{figure}
	\[\includegraphics[scale=0.4]{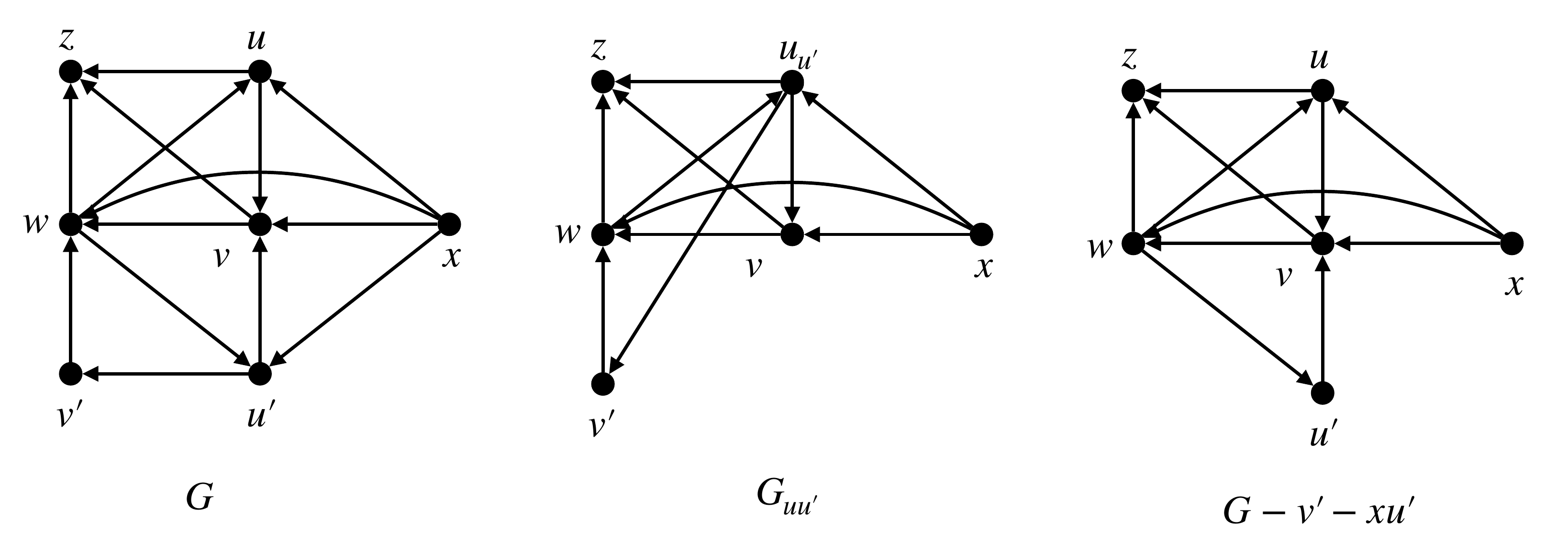}\]
	\caption{A homomorphically full oriented graph with a pair of vertices that are not neighbourhood comparable}
	\label{fig:counter}
\end{figure}

Recall that an \emph{oriented clique} is an oriented graph in which any two non-adjacent vertices are at directed distance $2$.
The name oriented clique arises from the fact an oriented colouring (i.e., a homomorphism) of such a graph assigns a different colour (i.e., image) for each vertex.
Observe that oriented cliques have no elementary homomorphisms.
From this observation the following two facts follow.
\begin{theorem}\label{thm:everyClique}
	Every oriented clique is a homomorphically full oriented graph.
\end{theorem}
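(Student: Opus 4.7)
The plan is to leverage the remark immediately preceding the statement, namely that an oriented clique has no elementary homomorphisms, together with the fact established earlier in the excerpt that every homomorphism is a composition of elementary homomorphisms. The proof should essentially reduce to checking that there is nothing to check.

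First I would recall the relevant restriction: for an oriented graph $G$, an elementary homomorphism is induced by identifying two vertices $u,v$ that are neither adjacent nor the ends of a $2$-dipath, since otherwise the identification would create a loop or a directed $2$-cycle and hence fail to produce an oriented graph. By the defining property of an oriented clique, every pair of non-adjacent vertices is already at directed distance $2$, so every pair of non-adjacent vertices is the ends of a $2$-dipath. Consequently no pair $u,v$ of distinct vertices of $G$ is eligible for an elementary identification, i.e., $G$ has no elementary homomorphisms.

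Next I would invoke the statement from the introduction that every homomorphism can be written as a composition of elementary homomorphisms, and combine it with the observation that to verify homomorphic fullness it suffices to check that the target of each elementary homomorphism is isomorphic to a subgraph. Since $G$ has no elementary homomorphisms at all, this condition is vacuously satisfied. Equivalently, the only complete homomorphism out of $G$ (up to composition with an isomorphism) is the identity, so the only homomorphic image of $G$ is $G$ itself, which is trivially isomorphic to a subgraph of $G$.

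I do not expect any real obstacle here; the statement is almost immediate once the elementary-homomorphism restriction for oriented graphs is pinned down. The only point that warrants care is the distinction between the oriented-graph and antisymmetric-digraph interpretations emphasised earlier in the paper: the argument genuinely relies on the oriented-graph restriction that forbids identifications creating $2$-cycles, since under the antisymmetric-digraph interpretation of the same underlying arc set, the ends of a $2$-dipath \emph{can} be identified and the conclusion would fail (as illustrated by the directed path on three vertices discussed earlier).
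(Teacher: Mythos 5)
Your proof is correct and follows exactly the paper's route: the paper proves this theorem by the single observation that oriented cliques admit no elementary homomorphisms (every non-adjacent pair is at directed distance $2$), so homomorphic fullness holds vacuously. Your additional remarks on why the oriented-graph interpretation is essential are accurate but not needed beyond that observation.
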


\begin{theorem}\label{thm:isCore}
	Let $G$ be a homomorphically full oriented graph. The core of $G$ is an oriented clique.
\end{theorem}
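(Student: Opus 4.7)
The plan is to argue by contradiction, leveraging the defining property of a core, namely that every endomorphism of a core is an automorphism. Let $C$ denote the core of $G$, viewed as an induced sub-oriented graph of $G$. Standard arguments (identical to the graph case) show that the core is a retract of $G$, so fix a retraction $r \colon G \to C$ with $r|_{V(C)} = \mathrm{id}_{V(C)}$.

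Suppose for contradiction that $C$ is not an oriented clique. Then there exist non-adjacent vertices $u, v \in V(C)$ that are not the endpoints of a $2$-dipath in $C$. Exactly the pairs with this property admit an elementary homomorphism in the oriented-graph setting, so the identification map $\phi \colon C \to C_{uv}$ is a well-defined complete homomorphism of oriented graphs, and $|V(C_{uv})| = |V(C)| - 1$.

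Next, I would exhibit $C_{uv}$ as a homomorphic image of $G$. The composition $\phi \circ r \colon G \to C_{uv}$ is a homomorphism, and it is complete because $r$ is surjective on vertices and arcs (being a retraction onto $C$) and $\phi$ is complete by construction. Since $G$ is homomorphically full, $C_{uv}$ must be isomorphic to a subgraph of $G$; let $\psi \colon C_{uv} \to G$ be an injective homomorphism realising this embedding.

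Finally, consider the composition $r \circ \psi \circ \phi \colon C \to C$. This is an endomorphism of $C$, and because $\phi$ identifies the distinct vertices $u$ and $v$, the composition sends $u$ and $v$ to the same vertex of $C$ and so is not injective. This contradicts the fact that every endomorphism of the core $C$ is an automorphism, completing the argument. The delicate point to verify carefully is the second step: one must confirm that the pair $(u,v)$ chosen inside $C$ really does admit an elementary homomorphism in the oriented-graph sense (non-adjacent and not on a $2$-dipath in $C$), so that $C_{uv}$ is again an oriented graph rather than one containing a digon. Once this is in place, the remaining steps are formal manipulations with the retraction and the embedding supplied by homomorphic fullness.
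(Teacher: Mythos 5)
Your proof is correct, but it runs in the opposite direction from the paper's. The paper argues directly: it takes $H$ to be a homomorphic image of $G$ with the fewest vertices, observes that minimality forces $H$ to be an oriented clique (otherwise a further elementary homomorphism would exist), and then uses homomorphic fullness once to embed $H$ back into $G$, whence $H$ is the core. You instead start from the core $C$, assume it is not an oriented clique, and derive a contradiction by composing the elementary homomorphism $C \to C_{uv}$ with a retraction $G \to C$ and the embedding $C_{uv} \hookrightarrow G$ supplied by fullness, violating the rigidity of cores (every endomorphism of a finite core is an automorphism). Your "delicate point" is in fact immediate: the negation of the oriented-clique condition is precisely that some non-adjacent pair of $C$ lies on no $2$-dipath of $C$, which is exactly the condition for an elementary homomorphism in the oriented-graph sense, so $C_{uv}$ is digon-free. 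The trade-off is that your argument leans on two standard facts about cores (that the core is a retract and that cores are rigid) which the paper does not need to import; in exchange, your route makes explicit why the conclusion is about \emph{the core} specifically, a point the paper's last sentence leaves somewhat implicit (it relies on the fact that an oriented clique admits only injective homomorphisms, so a subgraph $H$ of $G$ with $G \to H$ and $H$ an oriented clique must be the core). Both proofs hinge on the same key observation that oriented cliques are exactly the oriented graphs admitting no elementary homomorphism.
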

\begin{proof}
	Let $G$ be a homomorphically full oriented graph and let $H$ be the oriented graph with the fewest vertices such that there is a homomorphism $\phi:G \to H$ that is onto with respect to $A(H)$. 
	Observe that $H$ is necessarily an oriented clique as if otherwise, $H$ has a pair of vertices $u$ and $v$ such that $G \to H_{uv}$.
	By the definition of homomorphically full, $H$ is isomorphic to a subgraph of $G$. 
	Therefore $H$ is isomorphic to the core of $G$.
\end{proof}

Let $G$ be an homomorphically full oriented graph.
As our definition of homomorphically full oriented graph restricts homomorphism to targets that are oriented graphs, an elementary homomorphism of $G$ necessarily identifies a pair of non-adjacent vertices that are not the ends of a $2$-dipath.
That is, an elementary homomorphism necessarily identifies vertices that are not at directed distance either $1$ or $2$.
We define the \emph{undirected closure of $G$}, denoted $cl(G)$, to be the graph formed from $G$ by adding an edge between any pair of vertices at the end of $2$-dipath and then considering all arcs as edges.
That is, we have $uv \in E(cl(G))$ when $u$ and $v$ are at directed distance at most $2$ in $G$.
Note that for any $u,v \in V(G)$ we have that there exists an elementary homomorphism $G \to G_{uv}$  if and only if  there exists an elementary homomorphism $cl(G) \to cl(G)_{uv}$.

\begin{lemma}\label{lem:subgraphClosure}
	Let $G$ be an oriented graph and let $u$ and $v$ be a pair of vertices that are neither adjacent nor at directed distance $2$.
	We have  $cl(G)_{uv} \subseteq cl(G_{uv})$.
\end{lemma}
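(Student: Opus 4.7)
The plan is to work edge-by-edge, using the natural identification map $\phi: V(G) \to V(G_{uv})$ that fixes every vertex outside $\{u,v\}$ and sends both $u$ and $v$ to the new vertex $u_v$. This same map serves as the vertex map $V(cl(G)) \to V(cl(G)_{uv})$. The crux is a single observation: $\phi$ is non-increasing on directed distances in the following sense. If $x,y \in V(G)$ are at directed distance at most $2$ in $G$ and $\phi(x) \neq \phi(y)$, then $\phi(x)$ and $\phi(y)$ are at directed distance at most $2$ in $G_{uv}$. Once this is established, the conclusion follows immediately: every edge of $cl(G)_{uv}$ is the image of an edge $xy$ of $cl(G)$, which by definition of $cl(G)$ corresponds to a pair at directed distance at most $2$ in $G$; the claim then places $\phi(x)\phi(y)$ at directed distance at most $2$ in $G_{uv}$, so it is an edge of $cl(G_{uv})$.

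To verify the distance-non-increasing claim I would split into two cases. If $xy \in A(G)$, then $\phi(x)\phi(y) \in A(G_{uv})$ by construction, so we are done. If instead there is a $2$-dipath $x \to w \to y$ (or $y \to w \to x$) in $G$, then $\phi(x) \to \phi(w) \to \phi(y)$ is a directed walk of length $2$ in $G_{uv}$. For this walk to fail to witness directed distance $\leq 2$ between $\phi(x)$ and $\phi(y)$, one of $\phi(x)=\phi(w)$ or $\phi(w)=\phi(y)$ must occur. But $\phi$ identifies only the pair $\{u,v\}$, which is non-adjacent by hypothesis, whereas $xw$ and $wy$ are arcs; hence $\{x,w\} \neq \{u,v\}$ and $\{w,y\} \neq \{u,v\}$, ruling both collapses out. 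So the image is an honest $2$-dipath in $G_{uv}$, completing the claim.

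The only slightly delicate point is the degenerate case $\phi(x) = \phi(y)$, which arises precisely when $\{x,y\} = \{u,v\}$. Since by hypothesis $u$ and $v$ are neither adjacent nor the endpoints of a $2$-dipath, there is in fact no edge of $cl(G)$ between $u$ and $v$ to worry about, so such a pair does not yield an edge of $cl(G)_{uv}$ in the first place and can be ignored. Beyond this, I do not anticipate any real obstacle: the whole argument amounts to noting that closure is defined in terms of directed distance $\leq 2$, and that identification preserves arcs and collapses $2$-dipaths only to shorter configurations, never to longer ones.
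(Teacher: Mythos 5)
Your proof is correct and follows essentially the same route as the paper's: both arguments check edge-by-edge that every edge of $cl(G)_{uv}$ survives into $cl(G_{uv})$, with the key point in each being that a $2$-dipath of $G$ cannot collapse under the identification because the only identified pair, $\{u,v\}$, is non-adjacent. Your packaging of this as a ``distance non-increasing'' property of the identification map, together with the explicit treatment of the degenerate collapse cases, is if anything slightly more uniform than the paper's case split on whether the edge is incident to $u_v$.
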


\begin{proof}
	Let $G$ be an oriented graph.
	Let $u_v$ denote the vertex in both $cl(G_{uv})$ and $cl(G)_{uv}$ formed by identifying $u$ and $v$.
	As $cl(G_{uv})$ and $cl(G)_{uv}$ have the same vertex set, to show  $cl(G)_{uv} \subseteq cl(G_{uv})$	 it suffices to show that for all $xy \in E(cl(G)_{uv})$  we have $xy \in E(cl(G_{uv}))$.
	
	Consider first an arc $xy\in E(cl(G)_{uv})$ such that $x,y \neq u_v$.
	Since $xy\in E(cl(G)_{uv})$ and $x,y \neq u_v$, we have that $x$ and $y$ are adjacent in $U(G)$ or there is a $2$-dipath from $x$ to $y$ in $G$.
	Therefore $x$ and $y$ are adjacent in $cl(G)$ and so $xy \in  E(cl(G_{uv}))$.
	
	Assume without loss of generality that $x = u_v$.
	If one of $uy$ or $vy$ is an arc in $G$, then $u_vy$ is an arc in $G_{uv}$.
	Thus $xy$ is an edge in $cl(G_{uv}$.
	Otherwise, neither of $uy$ or $vy$ is an arc in $G$.
	Thus $xy$ as an edge in $cl(G)_{uv}$ could only have arisen from an edge of the form $uy$ or $vy$ in $cl(G)$.
	Therefore, there is a $2$-dipath in $G$ with one end at $y$ and the other at one of $u$ or $v$.
	Since $u$ and $v$ are not adjacent, this $2$-dipath exists in $G_{uv}$.
	Therefore $xy$ is an edge in $cl(G_{uv})$.
\end{proof}

Using Lemma \ref{lem:subgraphClosure} we find a connection between homomorphically full graphs and homomorphically full oriented graphs.

\begin{theorem}\label{thm:closure}
	If $G$ is a homomorphically full oriented graph, then $cl(G)$ is homomorphically full.
\end{theorem}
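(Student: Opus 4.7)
The plan is to show that $cl(G)$ satisfies the defining property of a homomorphically full graph by verifying only the elementary homomorphisms, and then to pull back each such identification to an elementary homomorphism of $G$ itself, where we can apply the hypothesis.

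First, I would recall the reduction noted in the paper that it suffices to check, for every pair $u, v$ of non-adjacent vertices of $cl(G)$, that $cl(G)_{uv}$ is isomorphic to a subgraph of $cl(G)$. So fix such a pair $u, v$. By the definition of $cl(G)$, non-adjacency of $u$ and $v$ in $cl(G)$ means that $u$ and $v$ are not adjacent in $G$ and there is no $2$-dipath in $G$ with $u$ and $v$ as ends. Therefore identifying $u$ and $v$ defines a legitimate elementary oriented-graph homomorphism $G \to G_{uv}$. Since $G$ is homomorphically full, $G_{uv}$ is isomorphic to a subdigraph $H$ of $G$ via some isomorphism $\psi$.

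Second, I would verify that this isomorphism lifts to the closure: $\psi$ induces an isomorphism from $cl(G_{uv})$ onto a subgraph of $cl(G)$. This is the one step that needs a small check. Any edge of $cl(G_{uv})$ either comes from an arc of $G_{uv}$, which $\psi$ sends to an arc of $H \subseteq G$ and hence to an edge of $cl(G)$, or it comes from a $2$-dipath in $G_{uv}$, which $\psi$ sends to a $2$-dipath in $H \subseteq G$, again yielding an edge of $cl(G)$. Since $\psi$ is a bijection on vertex sets, we obtain an isomorphism of $cl(G_{uv})$ onto its image in $cl(G)$.

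Third, I would invoke Lemma \ref{lem:subgraphClosure} to get $cl(G)_{uv} \subseteq cl(G_{uv})$. Composing the inclusion with the previous isomorphism yields that $cl(G)_{uv}$ is isomorphic to a subgraph of $cl(G)$, completing the verification.

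The main technical point is the second step: checking that taking closures is monotone under isomorphic embedding of oriented graphs. This is where one has to think a little, because the closure operation might in principle add edges that are not in the image, but the fact that we only need a subgraph (not an induced subgraph) of $cl(G)$ makes this work cleanly. Once that is in hand, Lemma \ref{lem:subgraphClosure} does the remaining work, and the rest is routine.
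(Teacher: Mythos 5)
Your proof is correct and follows essentially the same route as the paper: both arguments reduce to elementary homomorphisms, use the hypothesis to embed $G_{uv}$ as a subgraph $H$ of $G$, observe that taking closures is monotone so that $cl(G_{uv}) \cong cl(H) \subseteq cl(G)$, and finish with Lemma~\ref{lem:subgraphClosure} via the chain $cl(G)_{uv} \subseteq cl(G_{uv}) \cong cl(H) \subseteq cl(G)$. Your explicit check that the isomorphism lifts to closures is exactly the content the paper compresses into that chain of containments.
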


\begin{proof}	
	Let $G$ be an  homomorphically full oriented graph.
	Consider $cl(G)$ and $u,v \in V(G)$ such that $u$ and $v$ are not adjacent in $cl(G)$.
	Since $G$ is homomorphically full, it follows that $G_{uv}$ isomorphic to a subgraph of $G$.
	That is, there exists a subgraph $H$ of $G$ such that $G_{uv} \cong H$.
	Since $H$ is a subgraph of $G$, it follows that $cl(H)$ is a subgraph of $cl(G)$.
	And so by Lemma \ref{lem:subgraphClosure} we have
	
	\[ cl(G)_{uv} \subseteq cl(G_{uv}) \cong cl(H) \subseteq cl(G).\]
	
	Thus, $cl(G)_{uv}$ is isomorphic to a subgraph of $cl(G)$.
	Therefore $cl(G)$ is homomorphically full.
\end{proof}

\begin{corollary}
	A homomorphically full oriented graph has at most one nontrivial component.
\end{corollary}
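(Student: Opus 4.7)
The plan is to chain Theorem~\ref{thm:closure} with the $2K_2$-freeness characterization of homomorphically full graphs from Theorem~\ref{thm:main}(e). Concretely, I would argue by contradiction: suppose $G$ is a homomorphically full oriented graph with two distinct nontrivial components, and derive that $cl(G)$ contains an induced $2K_2$.

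First I would observe that the closure operation does not merge components. Any edge added in passing from $U(G)$ to $cl(G)$ comes from a $2$-dipath in $G$, whose endpoints necessarily lie in the same connected component of $U(G)$. Hence $cl(G)$ and $U(G)$ have exactly the same vertex partition into components, so in particular $cl(G)$ also has at least two nontrivial components.

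Next, pick an edge $e_1 = x_1y_1$ from one nontrivial component of $cl(G)$ and an edge $e_2 = x_2y_2$ from another. Since these edges lie in distinct components of $cl(G)$, there are no edges between $\{x_1,y_1\}$ and $\{x_2,y_2\}$, so $e_1 \cup e_2$ induces a $2K_2$ in $cl(G)$. On the other hand, by Theorem~\ref{thm:closure}, $cl(G)$ is homomorphically full, and by Theorem~\ref{thm:main}(e) it therefore contains no induced $2K_2$. This contradiction finishes the proof.

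There is no real obstacle; the only thing to be careful about is the first step, verifying that closure does not create edges between different components of $U(G)$, which is immediate from the definition of $cl(G)$ since a $2$-dipath is connected in $U(G)$.
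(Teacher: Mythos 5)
Your proof is correct and follows the same route as the paper: both argue that $cl(G)$ has the same component structure as $U(G)$, apply Theorem~\ref{thm:closure} to get that $cl(G)$ is homomorphically full, and then invoke the $2K_2$-freeness from Theorem~\ref{thm:main}. You have merely spelled out in more detail the steps the paper leaves implicit.
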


\begin{proof}
	Let $G$ be a homomorphically full oriented graph.
	Notice that $U(G)$ and $cl(G)$ have the same number of nontrivial components.
	The result now follows from Theorems \ref{thm:main} and \ref{thm:closure}.
\end{proof}

Statement (e) in Theorem \ref{thm:main} implies that homomorphically full graphs admit a forbidden subgraph characterization.
Similarly, statement (b) in Theorem \ref{thm:Defn1Main} implies that homomorphically full antisymmetric digraphs admit a forbidden subgraph characterization.
We find this to not be the case for homomorphically full oriented graphs.

\begin{theorem}
	Every oriented graph is an induced subgraph of an homomorphically full oriented graph.
\end{theorem}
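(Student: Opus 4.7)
The plan is to reduce to Theorem \ref{thm:everyClique}: it suffices to exhibit, for any oriented graph $G$, an oriented clique $C$ that contains $G$ as an induced subgraph. Since oriented cliques are homomorphically full, the conclusion follows immediately.

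I would construct $C$ from $G$ by adding one ``witness'' vertex for each pair of vertices of $G$ that fails to satisfy the oclique condition, and then attaching every new vertex to everything else. More precisely, fix a total order on $V(G)$, let $P$ denote the set of unordered pairs $\{u,v\} \subseteq V(G)$ of vertices that are non-adjacent in $G$ and not at directed distance $2$ in $G$ (in either direction), and for each $\{u,v\} \in P$ with $u<v$ introduce a new vertex $w_{uv}$ together with the arcs $u \to w_{uv}$ and $w_{uv} \to v$. This immediately puts $u$ and $v$ at directed distance $2$ in $C$. To handle the remaining pairs, I would further add, for each new vertex $w_{uv}$ and every $x \in V(G) \setminus \{u,v\}$, the arc $x \to w_{uv}$, and for each pair of distinct new vertices $w_{uv}, w_{u'v'}$, a single arc oriented according to some fixed rule (lexicographic on the pairs, say). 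Crucially, no arc is ever added between two vertices of $V(G)$, so $G$ is a true induced subgraph of $C$, and no arc is added in both directions between any two vertices, so $C$ is an oriented graph.

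Verification then splits into cases. For two vertices of $V(G)$: if they are adjacent in $G$ they remain adjacent in $C$; otherwise either they were already at directed distance $2$ in $G$ (hence in $C$) or the witness $w_{uv}$ provides a $2$-dipath. For a pair consisting of a new vertex $w_{uv}$ and any other vertex of $C$: by construction $w_{uv}$ is adjacent in $C$ to $u$, to $v$, to every other vertex of $V(G)$, and to every other new vertex. Hence every non-adjacent pair in $C$ is at directed distance $2$, so $C$ is an oriented clique, and Theorem \ref{thm:everyClique} finishes the proof.

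The only real thing to guard against is the temptation to add edges within $V(G)$ while trying to certify the oclique property; the construction above sidesteps this by placing every auxiliary arc on at least one new vertex, so that the induced structure on $V(G)$ is untouched. A secondary check is that the witness vertices interact cleanly with each other and with the old vertices, but this is handled uniformly by making each new vertex adjacent to everything else in $C$.
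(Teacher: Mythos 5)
Your proof is correct. It follows the same overall strategy as the paper --- reduce to Theorem \ref{thm:everyClique} by embedding an arbitrary oriented graph $G$ as an induced subgraph of an oriented clique --- but you build the host clique differently. The paper starts from a fixed oriented complete bipartite graph $B_n$ on parts $\{a_1,\dots,a_n\}$ and $\{b_1,\dots,b_n\}$, with $a_ib_j$ oriented from $a_i$ to $b_j$ exactly when $i \le j$, verifies once that $B_n$ is an oriented clique, and then superimposes the arcs of $G$ on the $a$-side; this adds exactly $n$ new vertices and the auxiliary structure depends only on $|V(G)|$. You instead add one witness vertex per ``bad'' pair (non-adjacent and not at directed distance $2$) and then saturate every remaining pair involving a new vertex with a single arc, which may cost up to $n(n-1)/2$ extra vertices but makes the verification entirely local: every non-adjacent pair of the host lies inside $V(G)$ and is certified either by a surviving $2$-dipath of $G$ or by its witness. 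Both constructions are valid, and both respect the one point where the statement could fail --- no arc is ever added between two vertices of $V(G)$, so $G$ really is induced; yours trades economy of the gadget for a shorter correctness check.
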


\begin{proof}
	By Lemma \ref{thm:everyClique}, it suffices to show that every oriented graph appears as an induced subgraph of some oriented clique.
	Let $G$ be an oriented graph with $n$ vertices.
	
	Let $B_n$ be the oriented complete bipartite graph with bipartition $(\{a_1, a_2, \ldots, $ $a_n\},$ $ \{b_1, b_2, \ldots, b_n\})$
	obtained by orienting the edge $a_ib_j$ from $a_i$ to $b_j$ if $i \leq j$ and from $b_j$ to $a_i$ if $i > j$.
	One can verify that $B_n$ is an oriented clique.
	
	Denote by $B^\prime_n$ the oriented clique obtained from the oriented clique $B_n$ by adding arcs such that the subgraph induced by $\{a_1, a_2, \ldots, a_n\}$ is isomorphic to $G$.
	
	By construction, the oriented graph $G$ is an induced subgraph of $B^\prime_n$.
	The result now follows by our previous remarks.
\end{proof}

As homomorphically full oriented graphs do not admit a forbidden subgraph orientation, one can wonder if there is an efficient recognition algorithm for homomorphically full oriented graphs.
We study this problem in the following section.

\section{Deciding if $G$ is Homomorphically Full}\label{sec:isFull}

We consider the problem of deciding if an oriented graph is in fact homomorphically full.
The corresponding decision problem for graphs and antisymmetric digraphs are Polynomial -- it is enough to compare the neighbourhoods of pairs of non-adjacent vertices.
However the example in Figure \ref{fig:hom} shows that such a procedure does not suffice for homomorphically full oriented graphs.
We see that $u$ and $u^\prime$ are not neighbourhood comparable, yet $G_{uu^\prime} \cong G-v$.
We use the example in Figure \ref{fig:hom} to show that the problem of deciding if an oriented graph is homomorphically full is GI-hard.
\begin{figure}
	\[\includegraphics[scale=0.5]{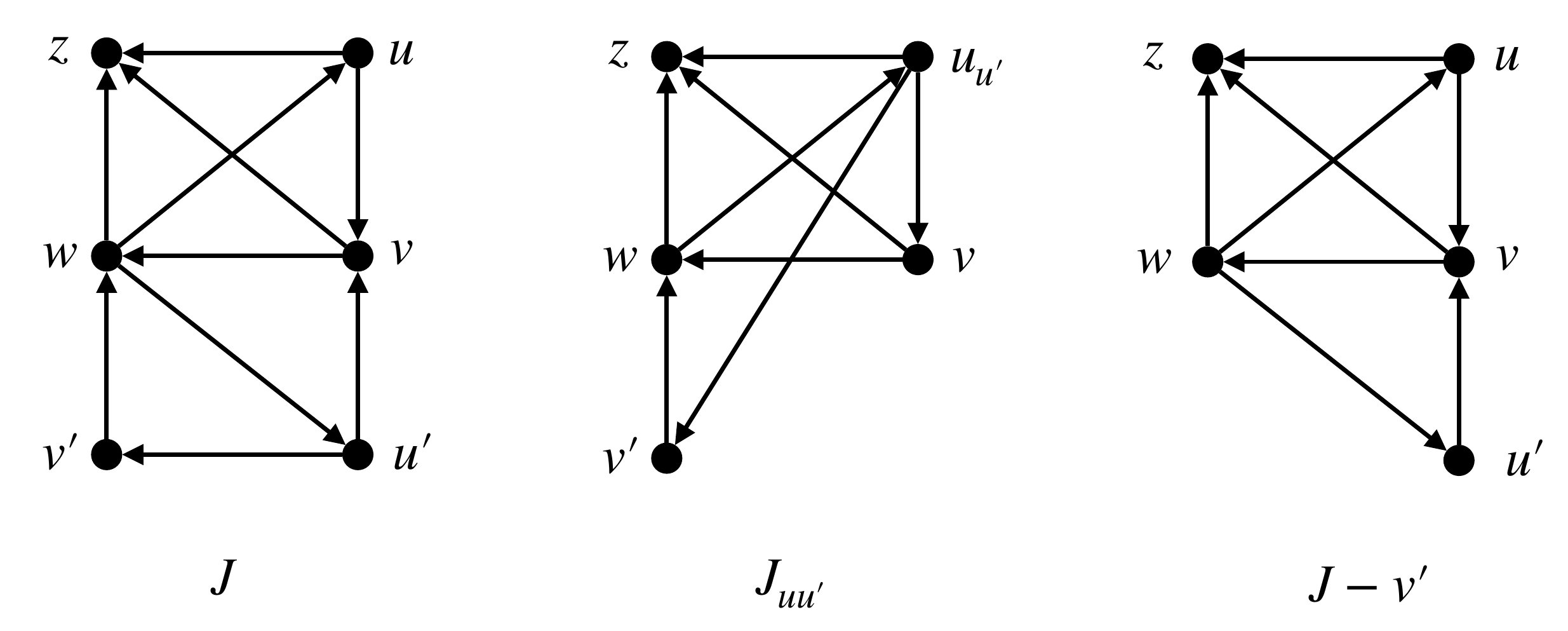}\]
	\caption{A homomorphically full oriented graph with a pair of vertices that are not neighbourhood comparable}
	\label{fig:hom}
\end{figure}
We do this by first showing that the problem of deciding if a pair of oriented cliques are isomorphic is GI-complete.
For these ends we define the following decision problems.

\medskip

\noindent\underline{HOMFULL}\\
\noindent \emph{Instance: }An oriented graph $G$.\\
\noindent \emph{Question:} Is $G$ homomorphically full?

\medskip
\noindent\underline{OCLIQUEISO}\\
\noindent \emph{Instance:} A pair of oriented cliques $G$ and $H$.\\
\noindent \emph{Question:} Is $G \cong H$?

\medskip
\noindent\underline{DAGISO}\\
\noindent \emph{Instance: }A pair of directed acyclic graphs $G$ and $H$\\
\noindent \emph{Question:} Is $G \cong H$?

We begin by showing OCLIQUEISO is GI-complete. 
For our reduction, we require the following result from Zemlyachenko et al.
\begin{theorem}\cite{Z85} \label{thm:dagiso}
	DAGISO is GI-complete.
\end{theorem}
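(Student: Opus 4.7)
The plan is to establish GI-completeness of DAGISO by showing both that DAGISO lies in the class GI and that graph isomorphism reduces to DAGISO in polynomial time.

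For the first direction, DAGISO reduces to undirected graph isomorphism via a standard orientation-encoding gadget: replace each arc $(u,v)$ of a DAG $D$ by a small asymmetric undirected subgraph (for example, subdivide the arc twice and attach a pendant vertex at a prescribed position along the subdivision so that the two endpoints can be told apart by local structure). The resulting undirected graph $U(D)$ satisfies $D_1 \cong D_2$ iff $U(D_1) \cong U(D_2)$, and the construction is polynomial; hence DAGISO is polynomial-time Turing reducible to GI.

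For the reverse reduction, given an undirected graph $\Gamma$ define the DAG $D(\Gamma)$ with vertex set $V(\Gamma) \cup \{x_e : e \in E(\Gamma)\}$ and arcs $(u, x_e)$ and $(v, x_e)$ for each edge $e = uv$ of $\Gamma$. Then $D(\Gamma)$ is a bipartite DAG (an orientation of the edge-subdivision of $\Gamma$ in which each subdivision vertex is a sink) in which every original vertex has in-degree $0$ and every subdivision vertex $x_e$ has in-degree $2$ and out-degree $0$. Any isomorphism $D(\Gamma) \to D(\Lambda)$ must respect in- and out-degrees, and hence sends $V(\Gamma)$ to $V(\Lambda)$ and subdivision vertices to subdivision vertices. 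Two vertices $u,v \in V(\Gamma)$ are adjacent in $\Gamma$ iff they share a common out-neighbour in $D(\Gamma)$, so the restriction of any such isomorphism to $V(\Gamma)$ is an isomorphism $\Gamma \to \Lambda$. Conversely, any isomorphism $\phi: \Gamma \to \Lambda$ extends to an isomorphism $D(\Gamma) \to D(\Lambda)$ via $x_{uv} \mapsto x_{\phi(u)\phi(v)}$. Hence $\Gamma \cong \Lambda$ iff $D(\Gamma) \cong D(\Lambda)$, giving GI $\leq$ DAGISO.

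The only delicate point is checking that original and subdivision vertices cannot be confused by an isomorphism. This is forced by the $(\mathrm{in\text{-}deg}, \mathrm{out\text{-}deg})$ signature: original vertices have signature $(0, \deg_\Gamma(v))$ while each subdivision vertex has signature $(2, 0)$, and these profiles never coincide, even for isolated vertices of $\Gamma$ (whose signature is $(0,0)$). The main potential obstacle in such a reduction — collision between original and gadget vertices under unusual degree sequences — is sidestepped completely by this bipartite sink-gadget design, so both reductions are polynomial and DAGISO is GI-complete.
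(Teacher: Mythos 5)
The paper does not actually prove this statement: it is imported from Zemlyachenko et al.\ and used as a black box in the reduction showing OCLIQUEISO is GI-complete. So you are supplying an argument where the paper supplies only a citation. Your hardness direction is correct and complete: the incidence DAG $D(\Gamma)$ is acyclic, the $(\mathrm{in},\mathrm{out})$-degree signatures $(0,\deg_\Gamma(v))$ versus $(2,0)$ force any isomorphism to preserve the bipartition into original and subdivision vertices, and adjacency in $\Gamma$ is recovered as ``shares a common out-neighbour,'' so $\Gamma\cong\Lambda$ if and only if $D(\Gamma)\cong D(\Lambda)$. Since the paper only ever reduces \emph{from} DAGISO, this is the half it actually relies on, and your proof of it would suffice for the paper's purposes.

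The one place you are too quick is the membership direction. The gadget ``subdivide each arc twice and attach a pendant at a prescribed position'' does not by itself guarantee that every isomorphism of the undirected encodings respects the gadgets: an original vertex of total degree $1$, $2$ or $3$ in $D$ has the same degree in $U(D)$ as, respectively, a pendant, the plain subdivision vertex, or the subdivision vertex carrying the pendant, so the claimed equivalence $D_1\cong D_2$ iff $U(D_1)\cong U(D_2)$ needs either a modified gadget (for instance, first attach to every original vertex a marker whose degree exceeds anything occurring inside an arc gadget) or an explicit argument that any isomorphism can be normalized to one mapping gadgets to gadgets. You flag exactly this collision issue for the hardness direction, where your sink-gadget genuinely avoids it, but you leave it unaddressed in the membership direction, and your closing sentence attributes the fix to the wrong reduction. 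It is a standard and easily repaired point, but as written that half of GI-completeness is asserted rather than proved.
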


Given an instance $G,H$ of DAGISO we construct oriented cliques $G^\star$ and $H^\star$ such that $G \cong H$ if and only if $G^\star \cong H^\star$.

We construct $G^\star$ using a pair of disjoint copies of $G$, say $G^L$ and $G^R$, and a fixed regular tournament $T$ with $2n+1$ vertices.
Let $V(G^L) = \{u^L_1, u^L_2,\dots, u^L_n \}$ and $V(G^R) = \{u^R_1, u^R_2,\dots, u^R_n \}$. 
Construct $G^\star$ by adding the following arcs to the oriented graph formed from the disjoint union of $G^L$, $G^R$ and $T$:
\begin{itemize}
	\item $u_i^Lu_i^R$ for all $1 \leq i \leq n$;
	\item $u_j^Ru_k^L$ for all $1 \leq j,k, \leq n, j\neq k$;
	\item $u^L_it$ for all $t \in V(T)$ and all $1 \leq i \leq n$; and
	\item $tu^R_i$ for all $t \in V(T)$ and all $1 \leq i \leq n$.
\end{itemize}

\begin{lemma}\label{lem:dagclique}
	Let $G$ and $H$ be directed acyclic graphs. We have  $G \cong H$ if and only if $G^\star \cong H^\star$.
\end{lemma}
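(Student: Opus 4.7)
The plan is to prove each direction of the biconditional separately. The forward direction is routine: given any isomorphism $\psi : G \to H$, I would extend it to a map $\psi^\star : V(G^\star) \to V(H^\star)$ by setting $\psi^\star(u_i^L) = \psi(u_i)^L$, $\psi^\star(u_i^R) = \psi(u_i)^R$, and taking the identity on $V(T)$ (recalling that the tournament $T$ is a fixed regular tournament used in both constructions). Each of the four bulleted arc families in the construction of $G^\star$ depends only on the indices $i$ and $j$ and not on the internal structure of $G$, so $\psi^\star$ preserves all of them; the arcs inside $G^L$ and $G^R$ are preserved by $\psi$ itself.

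For the reverse direction, the key is to show that any isomorphism $\phi : G^\star \to H^\star$ must respect the tripartition $V(G^L) \cup V(T) \cup V(G^R)$, via a degree calculation. Each vertex of $T$ has in-degree and out-degree exactly $2n$: $n$ in-neighbours and $n$ out-neighbours inside the regular tournament on $2n+1$ vertices, together with $n$ in-neighbours from $V(G^L)$ and $n$ out-neighbours to $V(G^R)$. Each vertex $u_i^L$ has out-degree at least $2n+2$ (it dominates all of $V(T)$ and also has the arc $u_i^L u_i^R$) and in-degree at most $2n-2$ (at most $n-1$ inside $G^L$, plus $n-1$ from $\{u_j^R : j \neq i\}$). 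Symmetrically, every vertex of $G^R$ has in-degree at least $2n+2$ and out-degree at most $2n-2$. These three degree profiles are pairwise disjoint, so $\phi$ must send $V(G^L)$, $V(T)$, $V(G^R)$ to $V(H^L)$, $V(T)$, $V(H^R)$ respectively.

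Once this is established, the conclusion follows cleanly. The induced subgraph of $G^\star$ on $V(G^L)$ is, by construction, a copy of $G$, and similarly the induced subgraph of $H^\star$ on $V(H^L)$ is a copy of $H$. Since $\phi$ is an isomorphism and maps $V(G^L)$ bijectively onto $V(H^L)$, its restriction to $V(G^L)$ yields the required isomorphism $G \to H$. I expect the main obstacle to be merely the degree bookkeeping; choosing $|V(T)| = 2n+1$ (rather than something smaller) is precisely what produces a strict gap between the three profiles, so no extreme degree sequence in $G$ can collapse the distinction between the parts.
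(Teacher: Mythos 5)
Your proof is correct and follows essentially the same route as the paper's: the same explicit extension of an isomorphism $G \to H$ to $G^\star \to H^\star$ in the forward direction, and the same out-degree separation ($\geq 2n+2$ on $V(G^L)$, exactly $2n$ on $V(T)$, $\leq 2n-2$ on $V(G^R)$) to force any isomorphism $G^\star \to H^\star$ to carry $V(G^L)$ onto $V(H^L)$ in the reverse direction. Your additional in-degree bookkeeping is harmless but not needed, as the out-degrees alone already separate the three parts.
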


\begin{proof}
	Let $G$ and $H$ be directed acyclic graphs with $n$ vertices. 
	Let $V(G) = \{u_1,u_2,\dots\, u_n\}$ and $V(H) = \{v_1,v_2,\dots, v_n\}$
	Let $\phi: G \to H$ be an isomorphism.
	Using $\phi$ we construct an isomorphism $\phi^\star: G^\star \to H^\star$ as follows:
	\begin{itemize}
		\item $\phi^\star(t) = t$ for all $t \in V(T)$;
		\item $\phi^\star(u_i^L) = \phi(u_i)^L$ for all $1 \leq i \leq n$; and
		\item $\phi^\star(u_i^R) = \phi(u_i)^R$ for all $1 \leq i \leq n$.
	\end{itemize}
	
	Assume now that $\beta^\star: G^\star \to H^\star$ is an isomorphism.
	It suffices to show that for all $u^L \in G^L$ we have  $\beta^\star(u^L) \in V(H^L)$.
	By observation $d^+(u^L) \geq 2n+2$.
	Therefore $d^+(\beta^\star(u^L)) \geq 2n+2$.
	In $H^\star$ for every $t \in V(T)$ we have $d^+(t) = 2n$.
	Further we have $d^+(v^R) \leq 2n-2$.
	Therefore $\beta^\star(u^L) \notin V(T) \cup V(H^R)$.
	It then follows that $\beta^\star(u^L) \in H_L$.
	Thus restricting $\beta^\star$ to $V(G^L)$ gives an isomorphism $G^L \to H^L$.
	Therefore $G \cong H$.
\end{proof}

\begin{theorem}\label{thm:ocliqueGIcomplete}
	OCLIQUEISO is GI-complete.
\end{theorem}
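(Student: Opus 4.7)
The plan is to establish both membership in GI and GI-hardness. Membership is immediate, since OCLIQUEISO is a restriction of the general digraph isomorphism problem. For hardness, I would reduce from DAGISO (GI-complete by Theorem \ref{thm:dagiso}) using the very construction $G \mapsto G^\star$ introduced just before Lemma \ref{lem:dagclique}. The map is computable in polynomial time ($G^\star$ has $4n+1$ vertices and every arc is given explicitly), and Lemma \ref{lem:dagclique} already shows $G \cong H$ if and only if $G^\star \cong H^\star$. The reduction will complete the proof provided that each $G^\star$ is actually an oriented clique, which is the one point left unverified.

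The main step I would carry out is thus to check the oriented clique property of $G^\star$. First I would confirm $G^\star$ is an oriented graph: $T$ is antisymmetric as a tournament, $G^L$ and $G^R$ are antisymmetric as DAGs, and each family of cross arcs between the three parts is specified in only one direction. Second, I would enumerate vertex pairs and exhibit a 2-dipath for each non-adjacent pair. Any two vertices of $V(T)$ are adjacent via the tournament; every $u_i^L$ is adjacent to every $t \in V(T)$ and every $u_i^R$ likewise; and $u_i^L$ is adjacent to $u_j^R$ via the matching arc when $i=j$ and via $u_j^R \to u_i^L$ when $i \neq j$. So the only non-adjacent pairs lie entirely inside $V(G^L)$ or entirely inside $V(G^R)$. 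For a non-adjacent pair $u_i^L, u_j^L$ with $i \neq j$, the dipath $u_i^L \to u_i^R \to u_j^L$ works; symmetrically, for a non-adjacent pair $u_i^R, u_j^R$ the dipath $u_i^R \to u_j^L \to u_j^R$ works. This confirms $G^\star$ is an oriented clique.

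Putting these together, $G \mapsto G^\star$ is a polynomial-time reduction from DAGISO to OCLIQUEISO, so the latter is GI-hard and hence GI-complete. The only genuine obstacle is the case analysis above; the subtler rigidity fact — that any isomorphism $G^\star \to H^\star$ must carry $V(G^L)$ onto $V(H^L)$ — has already been handled via the degree argument inside Lemma \ref{lem:dagclique}, so no additional structural work is needed here.
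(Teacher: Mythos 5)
Your proposal is correct and follows essentially the same route as the paper: a polynomial-time reduction from DAGISO via the $G^\star$ construction, with correctness supplied by Lemma \ref{lem:dagclique} and Theorem \ref{thm:dagiso}. The only addition is your explicit verification that $G^\star$ is an oriented clique (which the paper asserts without proof), and your case analysis there is accurate.
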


\begin{proof}
	The reduction is from DAGISO.
	Given an instance $G,H$ of DAGISO construct the instance $G^\star, H^\star$ of OCLIQUEISO. 
	Such a construction can be carried out in polynomial time.
	The result follows from Lemma \ref{lem:dagclique} and Theorem \ref{thm:dagiso}.
\end{proof}

Using Theorem \ref{thm:ocliqueGIcomplete}, we show HOMFULL is GI-hard.
Given an instance $G_1, G_2$ of OCLIQUEISO we construct an instance $\hat{G}$ of HOMFULL such that $G_1 \cong G_2$ if and only if $\hat{G}$ is homomorphically full.
We construct $\hat{G}$ from a copy of $J$ (as labelled as in Figure \ref{fig:hom}), a copy of $G_1$ and two copies of $G_2$, say $G_2$ and $G_2^\prime$, by adding a new vertex $q$ and the following arcs:
\begin{itemize}
	\item $qx$ for all $x \in V(J)$; and arcs $yq$ for $y \in V(G_1) \cup V(G_2) \cup V(G_2^{\prime})$
	\item $y_1y_2$, $y_2y_2^\prime$ and $y_2y_1$ for all $y_1\in  V(G_1)$, $y_2 \in V(G_2)$ and $y_2^\prime \in V(G_2^\prime)$.
	\item $wy_1$ for all $y_1 \in V(G_1)$.
	\item $vy_2$ for all $y_2 \in V(G_2)$.
	\item $uy_2^\prime$ for all $y_2^\prime \in V(G_2^\prime)$.
\end{itemize}

\begin{lemma}\label{lem:cliquefull}
	Let $G_1$ and $G_2$ be oriented cliques.
	We have $G_1 \cong G_2$ if and only if $\hat{G}$ is homomorphically full.
\end{lemma}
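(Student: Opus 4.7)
The proof goes via the fact that, in $\hat{G}$, the unique pair of vertices that can be identified by an elementary homomorphism is $(u, u')$. To justify this I would enumerate the possibilities: $q$ is adjacent to every other vertex, and from $q$ one can reach each clique via $u, v$ or $w$ in two steps (and conversely each clique vertex reaches $q$ directly), so no $2$-dipath is missing around $q$; within any of the three oriented cliques $G_1, G_2, G_2'$ every pair of non-adjacent vertices is already at directed distance $2$; between two distinct cliques the directed $3$-cycle blowup $G_1 \to G_2 \to G_2' \to G_1$ together with arcs through $q$ fills in every pair; pairs with one endpoint in $J$ and one in a clique are handled by the arcs $u \to G_2'$, $v \to G_2$, $w \to G_1$ and by $2$-dipaths through $q$; and within $J$ the only identifiable pair is $(u, u')$, by the property of $J$ noted around Figure \ref{fig:hom}. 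Consequently, $\hat{G}$ is homomorphically full if and only if $\hat{G}_{uu'}$ is isomorphic to a subgraph of $\hat{G}$.

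For the direction $G_1 \cong G_2 \Rightarrow \hat{G}$ homomorphically full, I would exhibit an explicit isomorphism $\Phi: \hat{G}_{uu'} \to S$, where $S$ is a subgraph of $\hat{G}$ obtained by deleting $v$ and possibly some arcs (mirroring the pattern in Figure \ref{fig:counter}, where both a vertex and an arc were deleted). The isomorphism $\phi: J_{uu'} \to J - v$ is already in hand; on the cliques, I would use isomorphisms $G_1 \cong G_2 \cong G_2'$ to cyclically permute the three cliques so that the out-arcs of $u_{u'}$ to $G_2'$ in $\hat{G}_{uu'}$ correspond to the out-arcs of $\phi(u_{u'})$ to the appropriate clique in $S$, while preserving the arcs to and from $q$ and the cross-clique $3$-cycle structure. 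Book-keeping on incidences confirms the map is well-defined.

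For the other direction, suppose $\hat{G}$ is homomorphically full and let $\Psi: \hat{G}_{uu'} \to S$ be an isomorphism onto a subgraph $S$ of $\hat{G}$. Since $|V(S)| = |V(\hat{G})| - 1$, exactly one vertex $z$ of $\hat{G}$ is omitted by $\Psi$. Degree and neighbourhood considerations (the combined vertex $u_{u'}$ has out-arcs to a whole clique-copy isomorphic to $G_2'$, so its image must too) restrict $z$ to $V(J)$; the asymmetric roles of $u, v, w$ then force $z = v$ and $\Psi(u_{u'})$ to be a vertex of $\hat{G}$ whose out-arcs saturate an entire clique. Tracking which clique goes where under $\Psi$, together with the obvious isomorphism $G_2 \cong G_2'$, yields an explicit isomorphism $G_1 \cong G_2$.

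The hard part will be the backward direction, specifically ruling out alternative placements for $\Psi(u_{u'})$ and for the images of the three cliques. This requires using the distinguishing features of $u, v, w$, namely that each dominates a different clique, together with degree and reachability arguments in $\hat{G}$ to show that any isomorphism of $\hat{G}_{uu'}$ onto a subgraph of $\hat{G}$ must respect this three-way distinction and thereby produce the required isomorphism between $G_1$ and $G_2$.
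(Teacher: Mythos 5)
Your high-level plan matches the paper's: reduce the question to whether $\hat{G}_{uu'}$ embeds as a subgraph of $\hat{G}$, build the embedding in the forward direction by cyclically permuting the three cliques (the paper's map sends $u_{u'}\mapsto w$, $w\mapsto v$, $v\mapsto u$, and $G_2'\mapsto G_2\mapsto G_1\mapsto G_2$ accordingly), and in the backward direction pin down the embedding far enough to read off an isomorphism $G_1\cong G_2$. However, there is a concrete error running through your write-up: you have lost the vertex $v'$ of $J$. First, your claim that $(u,u')$ is the \emph{unique} identifiable pair is false; $\hat{G}$ has two elementary homomorphisms, identifying $(u,u')$ and identifying $(v,v')$, the latter being harmless because $v$ and $v'$ are neighbourhood comparable so $\hat{G}_{vv'}\cong\hat{G}-v'$. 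Your final reduction happens to land on the right statement, but only by accident.

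Second, and more seriously, your chosen target for the isomorphism is wrong, and this breaks both directions. You propose $\Psi:\hat{G}_{uu'}\to S$ with $S\subseteq\hat{G}-v$ (``deleting $v$ and possibly some arcs''), and in the backward direction you assert the omitted vertex is forced to be $v$. An arc count rules this out: identifying $u$ with $u'$ destroys only one arc per common neighbour of $u$ and $u'$, and these two vertices have no common neighbours among the clique vertices (only $u$ sends arcs to $G_2'$, and $u'$ sends none to any clique), so $|A(\hat{G}_{uu'})|=|A(\hat{G})|-c$ for a small constant $c$; whereas deleting $v$ destroys, among others, all $|V(G_2)|$ arcs from $v$ into $G_2$, so every subgraph of $\hat{G}$ avoiding $v$ has strictly fewer arcs than $\hat{G}_{uu'}$ once $G_2$ is nonempty. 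The correct omitted vertex is the low-degree vertex $v'$ (one checks $|A(\hat{G}-v')|=|A(\hat{G}_{uu'})|$, so the embedding is onto $\hat{G}-v'$ with no arcs deleted), and the backward direction must force $s=v'$ rather than $s=v$. Note also that the paper eliminates the candidates $s\in V(G_1)\cup V(G_2)\cup V(G_2')\cup\{q\}$ by a core argument (the core of $\hat{G}$ is an oriented clique on $|V(G_1)|+2|V(G_2)|+4$ vertices that survives the identification but not those deletions), which is more than your appeal to the out-neighbourhood of $u_{u'}$ delivers on its own.
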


\begin{proof}
	Let $G_1$ and $G_2$ be oriented cliques.
	By construction $\hat{G}$ has two elementary homomorphisms: $G \to G_{uu^\prime}$ and $G \to G_{vv^\prime}$.
	As $v$ and $v^\prime$ are neighbourhood comparable, we have $G_{vv^\prime} \cong G - v^\prime$.
	Thus $\hat{G}$ is homomorphically full if and only if $\hat{G}_{uu^\prime}$ is isomorphic to some subgraph of $\hat{G}$.
	
	Let $\phi:G_2 \to G_1$ be an isomorphism
	We extend $\phi$ as follows such that $\phi^\star: \hat{G}_{uu^\prime} \to \hat{G} - v^\prime$ is an isomorphism.
	Let
	\begin{itemize}
		\item $\phi^\star(uu^\prime) = w$;
		\item $\phi^\star(v) = u$;
		\item $\phi^\star(w) = v$;
		\item $\phi^\star(v^\prime) = u^\prime$;
		\item $\phi^\star(y_2) = \phi(y_2)$ for all $y_2 \in V(G_2)$;
		\item $\phi^\star(y_1) = \phi^{-1}(y_1^\prime)$ for all $y_1 \in V(G_1)$; and
		\item $\phi^\star(y_2^\prime) = y_2$ for all $y_2^\prime$ in $V(G_2^\prime)$.
	\end{itemize}
	
	Assume now $\hat{G}$ is homomorphically full.
	Consider $\hat{C}$, the core of $\hat{G}$.
	By Theorem \ref{thm:isCore}, $\hat{C}$ is an oriented clique.
	By observation we produce a copy of $\hat{C}$ by removing $v^\prime$ and $u^\prime$ from $\hat{G}$.
	We see then that $\hat{C}$ is an oriented clique with $|V(G_1)| + 2|V(G_2)| + 4$ vertices.
	We also observe that neither $u^\prime$ nor $v^\prime$ are contained within a copy of $\hat{C}$.
	Consider now $\hat{G}r_{uu^\prime}$.
	Identifying a pair of vertices if $\hat{G}$ cannot change the core of $\hat{G}$ and so $\hat{C}$ is the core of $\hat{G}_{uu^\prime}$.
	This implies that $\hat{G}_{uu^\prime}$ has as a subgraph an oriented clique with $|V(G_1)| + 2|V(G_2)| + 4$ vertices.
	
	As $\hat{G}$ is homomorphically full, $\hat{G}_{uu^\prime}$ is isomorphic to some subgraph of $\hat{G} - s$ for some vertex $s \in V(\hat{G})$.	
	Since $\hat{G}- u^\prime$ has a vertex of degree $2$ (and $\hat{G}_{uu^\prime}$ does not), it cannot be that $s = u^\prime$.
	If $s \in V(G_1) \cup V(G_2) \cup V(G_2^\prime)  \cup \{q\}$, then $\hat{G}-s$ does not contain a copy of $\hat{C}$.
	Therefore $s = v^\prime$.
	Notice that $\hat{G}- v^\prime$ and $\hat{G}_{uu^\prime}$ have the same number of arcs and the same number of vertices.
	And so there is an isomorphism $\beta: \hat{G}_{uu^\prime} \to \hat{G} - v^\prime$.
	
	Each of $\hat{G}_{uu^\prime}$ and $\hat{G} - v^\prime$ has a single vertex of degree three.
	Therefore $\beta(v^\prime) = u^\prime$. 	
	In $\hat{G}_{uu^\prime}$, the vertex $v^\prime$ has a single out-neighbour: $w$.
	Therefore $\beta(w) = v$, as $v$ is the lone out-neighbour of the image of $v^\prime$ under $\beta$.
	In $\hat{G}_{uu^\prime}$, the vertex $v^\prime$ has a pair of in-neighbours: $q$ and $uu^\prime$.
	By considering the direction of the arc between $q$ and $uu^\prime$ we see $\beta(q) = q$ and $\beta(uu^\prime)=w$.
	Since $\beta(q) = q$ and $q$ has only four out-neighbours (three of which we have already considered), it follows that $\beta(v) = u$.	
	Since $\beta$ is an isomorphism, and all other vertices are accounted for, it must be that $\beta(t) \in V(G_1) \cup V(G_2) \cup V(G_2^\prime) $ for all $t \in V(G_1) \cup V(G_2) \cup V(G_2^\prime)$.
	As $\beta(w) = v$ it must be that $\beta(y) \in V(G_2)$ for all $y \in V(G_1)$.
	And so restricting $\beta$ to $V(G_1)$ gives an isomorphism $G_1 \to G_2$.
\end{proof}

\begin{theorem}
	HOMFULL is GI-hard.
\end{theorem}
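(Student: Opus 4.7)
The plan is to conclude directly by a polynomial-time reduction from OCLIQUEISO, which was shown to be GI-complete in Theorem \ref{thm:ocliqueGIcomplete}. All the substantive work has already been carried out in the construction of $\hat{G}$ from $G_1, G_2$ (the copy of $J$, the vertex $q$, the three oriented cliques, and the prescribed arcs) together with Lemma \ref{lem:cliquefull}, which establishes that $G_1 \cong G_2$ if and only if $\hat{G}$ is homomorphically full.

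The steps I would carry out, in order, are as follows. First, take an arbitrary instance $(G_1, G_2)$ of OCLIQUEISO and apply the construction preceding Lemma \ref{lem:cliquefull} to produce the oriented graph $\hat{G}$. Second, verify that this construction can be carried out in polynomial time: the vertex set of $\hat{G}$ consists of $V(J)$, one copy of $V(G_1)$, two copies of $V(G_2)$, and the single vertex $q$, so $|V(\hat{G})| = |V(J)| + |V(G_1)| + 2|V(G_2)| + 1$; the arc set is likewise bounded by a polynomial in $|V(G_1)| + |V(G_2)|$, since only complete bipartite joins and the fixed-size gadget $J$ contribute. Third, invoke Lemma \ref{lem:cliquefull} to conclude that $\hat{G}$ is a yes-instance of HOMFULL if and only if $(G_1, G_2)$ is a yes-instance of OCLIQUEISO. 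Fourth, combine this equivalence with the GI-hardness half of Theorem \ref{thm:ocliqueGIcomplete} to deduce that HOMFULL is GI-hard.

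There is no real obstacle in this final step, since the technical content has been absorbed into the preceding lemmas; the only thing to be careful about is to note explicitly that the reduction is polynomial and many-one, so that the GI-hardness transfers in the correct direction from OCLIQUEISO to HOMFULL.
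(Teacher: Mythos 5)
Your proposal is correct and follows essentially the same route as the paper: a polynomial-time many-one reduction from OCLIQUEISO via the construction of $\hat{G}$, with the correctness of the reduction supplied by Lemma \ref{lem:cliquefull} and the hardness of the source problem by Theorem \ref{thm:ocliqueGIcomplete}. Your explicit accounting of the size of $\hat{G}$ is a harmless elaboration of the paper's one-line claim that the construction is polynomial.
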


\begin{proof}
	The reduction is from OCLIQUEISO.
	Given an instance $G_1,G_2$ of OCLIQUEISO construct the instance $\hat{G}$ of HOMFULL.
	Such a construction can be carried out in polynomial time.
	The result follows from Lemma \ref{lem:cliquefull} and Theorem \ref{thm:ocliqueGIcomplete}
\end{proof}

\section{Orientations of Homomorphically Full Graphs} \label{sec:orientations}
Theorem \ref{thm:Defn1Main} implies that one may decide in polynomial time if a graph is an underlying graph of some homomorphically full antisymmetric digraph. In this section we consider the related problem for homomorphically full oriented graphs.
We begin by noticing a relationship between homomorphically full graphs and homomorphically full oriented graphs.

\begin{theorem}\label{thm:GiveOrient}
	Every orientation of a homomorphically full graph is a homomorphically full oriented graph.
\end{theorem}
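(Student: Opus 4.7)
The plan is to show that for an orientation $G$ of a homomorphically full graph $\Gamma = U(G)$, every elementary homomorphism of $G$ identifies a pair of vertices whose resulting quotient is isomorphic to $G$ minus one of them; then iterate this observation to handle arbitrary homomorphic images.

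First I would pick two vertices $u,v \in V(G)$ that can be identified by an elementary homomorphism of the oriented graph $G$. By definition this means $uv \notin E(\Gamma)$ and there is no $2$-dipath in $G$ between $u$ and $v$ (in either direction). Since $\Gamma$ is homomorphically full, Theorem \ref{thm:main}(b) gives that $u$ and $v$ are neighbourhood comparable in $\Gamma$, so without loss of generality $N_\Gamma(u) \subseteq N_\Gamma(v)$.

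The main step is to show that for every $x \in N_\Gamma(u)$, the arc between $u$ and $x$ and the arc between $v$ and $x$ point the same way. This is where the $2$-dipath restriction is used: if $ux \in A(G)$, then $xv \notin A(G)$ (otherwise $u \to x \to v$ would be a $2$-dipath contradicting that $u,v$ are identifiable), so the arc between $x$ and $v$ must be $vx$; symmetrically, if $xu \in A(G)$, then $vx \notin A(G)$, so the arc must be $xv$. This is the only nontrivial step, but it is a straightforward case-check, so I do not expect a real obstacle—the content is that neighbourhood comparability in $\Gamma$ combined with the absence of $2$-dipaths forces matching arc directions.

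With this in hand, the map $\phi: V(G_{uv}) \to V(G-u)$ sending $u_v \mapsto v$ and fixing every other vertex is easily checked to be an isomorphism, so $G_{uv}$ is isomorphic to the induced subgraph $G-u$ of $G$. Finally, to deduce homomorphic fullness for arbitrary homomorphic images, I would iterate: any complete homomorphism $G \to H$ factors as a composition of elementary homomorphisms, and at each step the intermediate oriented graph is of the form $G - S$ for some $S \subseteq V(G)$, hence is an orientation of the induced subgraph $\Gamma - S$. Since the class of homomorphically full graphs is closed under induced subgraphs (e.g.\ by the forbidden-subgraph characterization in Theorem \ref{thm:main}(e)), the hypothesis propagates and each elementary step again reduces to removing a vertex. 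Therefore every homomorphic image of $G$ is isomorphic to an induced subgraph of $G$, and in particular $G$ is homomorphically full.
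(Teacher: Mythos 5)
Your proof is correct and follows essentially the same route as the paper: both reduce to showing that any pair of vertices identifiable by an elementary homomorphism is neighbourhood comparable, by combining Theorem \ref{thm:main}(b) for the underlying graph with the observation that the absence of a $2$-dipath between $u$ and $v$ forces $N^+(u)\subseteq N^+(v)$ and $N^-(u)\subseteq N^-(v)$. Your closing induction on induced subgraphs merely makes explicit the reduction to elementary homomorphisms that the paper established in its introduction, so it is a difference of presentation rather than of substance.
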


\begin{proof}
	Let $G$ be an orientation of a homomorphically full graph.
	We proceed by verifying that any pair of vertices that can be identified by an elementary homomorphism are neighbourhood comparable.
	Consider $u$ and $v$ such that $u$ and $v$ are either non-adjacent nor the ends of a $2$-dipath.
	If no such pair exists, then $G$ is an oriented clique,
	By Theorem \ref{thm:everyClique}, $G$ is homomorphically full.
	As $U(G)$ is homomorphically full, Theorem \ref{thm:main} implies, without loss of generality,  $N(u) \subseteq N(v)$.
	Since $u$ and $v$ are not the ends of a $2$-dipath we have \[N^+(u) \cap N^-(v) = N^+(v) \cap N^-(u) = \emptyset.\]
	Therefore $N^+(u) \subseteq N^+(v)$ and $N^-(u) \subseteq N^-(v)$.
	That is, $u$ and $v$ are neighbourhood comparable.
	This completes the proof.
\end{proof}

Recall from Theorem \ref{thm:Defn1Char} that the underlying graph of a homomorphically full antisymmetric graph is a homomorphically full graph.
From this we obtain the following corollary to Theorem \ref{thm:GiveOrient}.

\begin{corollary}
	Every homomorphically full antisymmetric graph is a homomorphically full oriented graph.
\end{corollary}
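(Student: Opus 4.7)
The plan is to chain together Theorem \ref{thm:Defn1Char} and Theorem \ref{thm:GiveOrient}, which together essentially give the result for free. Let $G$ be a homomorphically full antisymmetric digraph. By Theorem \ref{thm:Defn1Char}, its underlying graph $U(G)$ is a homomorphically full graph. Because $G$ is antisymmetric, $G$ is by definition an orientation of $U(G)$: every arc of $G$ corresponds to an edge of $U(G)$ equipped with a direction, and the antisymmetry condition rules out the duplicated-arc case. Hence Theorem \ref{thm:GiveOrient} applies to $G$, and we conclude that $G$ is a homomorphically full oriented graph.

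The one conceptual subtlety worth flagging in the write-up is that the notions of ``homomorphically full'' for antisymmetric digraphs and for oriented graphs are \emph{a priori} different: in the antisymmetric-digraph setting the targets of complete homomorphisms are allowed to contain directed $2$-cycles, while in the oriented-graph setting they are not. This asymmetry cuts in the favourable direction here, because the class of oriented-graph elementary homomorphisms available at $G$ is contained in the class of antisymmetric-digraph elementary homomorphisms at $G$. However, I do not need to argue this directly, since Theorem \ref{thm:GiveOrient} is already phrased in the oriented-graph sense; the only fact I extract from the antisymmetric-digraph hypothesis is that $U(G)$ is homomorphically full as a graph.

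There is essentially no obstacle: the proof is a two-step invocation of already-established results. The only place where a careless reading might trip the reader is the interpretation shift from ``antisymmetric digraph $G$'' to ``oriented graph $G$''; I would explicitly write the sentence ``since $G$ is antisymmetric, $G$ is an orientation of $U(G)$'' so that the transition is unambiguous before applying Theorem \ref{thm:GiveOrient}.
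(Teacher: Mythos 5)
Your proof is correct and follows exactly the paper's route: Theorem \ref{thm:Defn1Char} gives that $U(G)$ is homomorphically full, and Theorem \ref{thm:GiveOrient} then applies to $G$ viewed as an orientation of $U(G)$. The extra remark distinguishing the two senses of ``homomorphically full'' is a reasonable clarification but not a departure from the paper's argument.
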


The converse to Theorem \ref{thm:GiveOrient} is false -- every oriented clique is homomorphically full.
The oriented clique given in Figure \ref{fig:planarClique} has an underlying graph that is not homomorphically full --- there exist pairs of non-adjacent vertices that are not neighbourhood comparable (see Theorem \ref{thm:main}).
However, the statement of Theorem \ref{thm:closure} implies that every homomorphically full oriented graph arises as a subgraph induced by the arc set of a mixed graph whose underlying graph is homomorphically full.
In the case of the oriented graph in Figure \ref{fig:planarClique}, this mixed graph is a partial orientation of a complete graph.

\begin{figure}
\[	\includegraphics[scale=0.5]{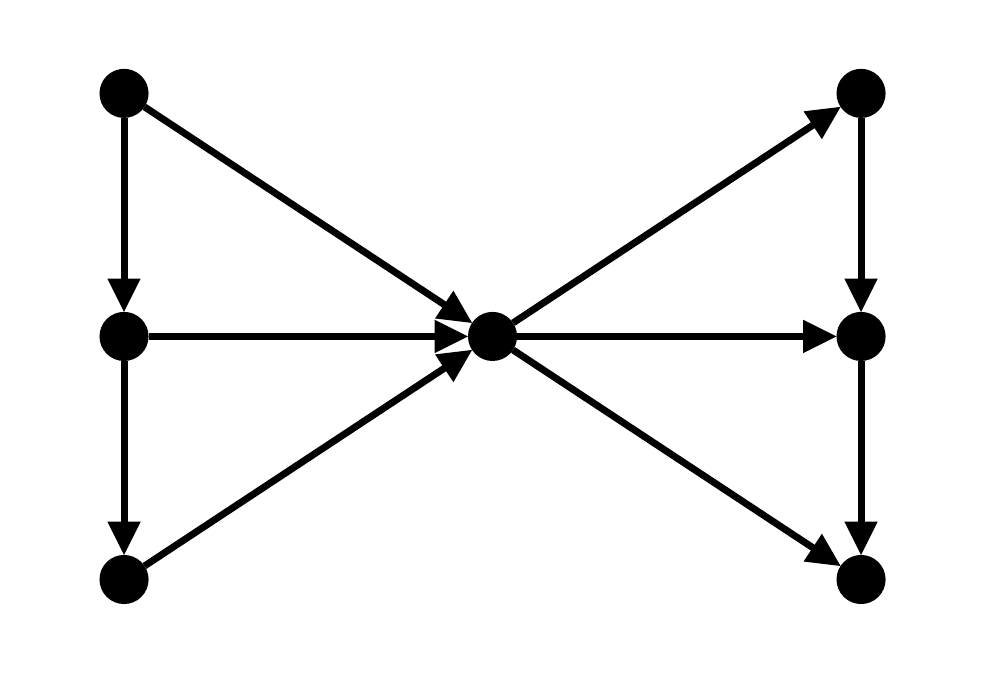}\]
	\caption{An oriented clique on $7$ vertices.}
	\label{fig:planarClique}
\end{figure}

We turn now to considering the problem of deciding if an undirected graph admits a homomorphically full orientation.

\medskip
\noindent\underline{FULLORIENT}\\
\noindent \emph{Instance: } A graph $\Gamma$.\\
\noindent \emph{Question:} Does $\Gamma$ admit an orientation that is  homomorphically full?

\medskip
We show FULLORIENT is NP-complete by way of reduction to the problem of deciding if a graph admits an orientation as an oriented clique.

\medskip
\noindent\underline{OCLIQUE}\\
\noindent \emph{Instance: } A simple graph $\Gamma$.\\
\noindent \emph{Question:} Does $\Gamma$ admit an orientation as an oriented clique?

\medskip

For our reduction, we require the following result of Kirgizov et al.
\begin{theorem}\cite{K16} \label{thm:oCliqueNPC}
	OCLIQUE is NP-complete.
\end{theorem}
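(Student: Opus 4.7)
The argument splits cleanly into NP membership and NP-hardness. For membership, given a proposed orientation $D$ of $\Gamma$, one can iterate over every non-adjacent pair $\{u,v\} \subseteq V(\Gamma)$ and check whether some vertex $w$ satisfies $uw, wv \in A(D)$ or $vw, wu \in A(D)$. A naive implementation runs in $O(|V(\Gamma)|^3)$ time, so OCLIQUE is in NP.

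For NP-hardness, the plan is to reduce from $3$-SAT. For each variable $x_i$, I would design a variable gadget whose only orientations extending to an oriented clique come in two essentially distinct types, labelled $\mathrm{True}$ and $\mathrm{False}$ (for instance, one using an induced subgraph whose local non-edges admit coverage by $2$-dipaths only in these two symmetric ways). For each clause $C_j = \ell_{j,1} \vee \ell_{j,2} \vee \ell_{j,3}$, I would attach a clause gadget containing a designated non-edge $\{a_j, b_j\}$ equipped with three candidate intermediate vertices $w_{j,1}, w_{j,2}, w_{j,3}$, where $w_{j,k}$ is wired into the variable gadget for $\ell_{j,k}$ so that $a_j \to w_{j,k} \to b_j$ (or its reverse) is realised exactly when $\ell_{j,k}$ evaluates to True. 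Then the distinguished non-edge $\{a_j, b_j\}$ admits a covering $2$-dipath iff the clause $C_j$ is satisfied, and the entire graph is oriented-clique-orientable iff the formula is satisfiable.

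The main obstacle is that the oriented-clique condition is global: every non-edge introduced anywhere in the construction, both within gadgets and between them, must be simultaneously covered by some $2$-dipath. Gadgets therefore cannot be analysed in isolation, since an edge placed for one purpose can inadvertently serve as the middle arc of an unrelated $2$-dipath, or can be forced into an unintended orientation by constraints arising elsewhere. The standard remedy is to augment the construction with a small number of auxiliary ``universal'' vertices, adjacent to most of the graph, whose orientations can be chosen to absorb stray non-edges without interacting with the truth-value logic; a regular tournament analogous to the $T$ used in the construction of $G^\star$ in Section \ref{sec:isFull} is a natural candidate. Calibrating the gadgets so that the two intended orientations remain the only oriented-clique extensions, while keeping the construction polynomial in size, is where the real technical effort lies; once that isolation is secured, soundness and completeness of the reduction reduce to finite case analyses on each gadget type.
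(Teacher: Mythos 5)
Your NP-membership argument is fine: an orientation is a polynomial-size certificate and the oriented-clique condition is checkable in $O(n^3)$ time. But the hardness half, which is the entire content of the theorem, is not a proof --- it is a plan for one. You never exhibit the variable gadget, never exhibit the clause gadget, and never verify either direction of the reduction; indeed you explicitly defer the central difficulty (``calibrating the gadgets so that the two intended orientations remain the only oriented-clique extensions \ldots is where the real technical effort lies''). The obstacle you identify is real and is exactly what makes this problem nontrivial: the oriented-clique condition is a global covering condition on \emph{all} non-edges, so every pair of vertices lying in different gadgets must also be handled, and an edge inserted to cover one non-edge can silently create an unwanted $2$-dipath or force an orientation that destroys the two-state behaviour of a variable gadget. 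Gesturing at ``a small number of auxiliary universal vertices'' does not resolve this: a vertex adjacent to most of the graph is itself the midpoint of many $2$-dipaths, and one must prove that these do not let an unsatisfied clause's non-edge $\{a_j,b_j\}$ be covered through the auxiliary structure rather than through a true literal. Until the gadgets are written down and that non-interference is established, the claimed equivalence ``oriented-clique-orientable iff satisfiable'' is unsupported in both directions.

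For comparison: the paper does not prove this statement at all --- it is quoted from \cite{K16}, where the reduction (to the equivalent problem of orienting a graph so that its weak diameter is at most $2$) is carried out with explicit gadgets and a full case analysis of precisely the kind you postponed. So your overall strategy (3-SAT, two-state variable gadgets, clause non-edges covered by literal-controlled $2$-dipaths, plus absorbing structure for cross-gadget pairs) is a reasonable blueprint and broadly consistent with how such results are proved, but as submitted it establishes only membership in NP. Either complete the construction and verify it, or cite the known result as the paper does.
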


Given an instance $\Gamma$, of OCLIQUE we construct an instance $\tilde{\Gamma}$ of FULLORIENT such that $\Gamma$ admits an orientation as an oriented clique if and only if $\tilde{\Gamma}$ admits an orientation that is homomorphically full.
Let $V(\Gamma) = \{v_1, v_2, \ldots, v_n\}$.
Let $\Lambda$ be a complete graph on $n+2$ vertices with vertex set $V(\Lambda) = \{v_1^\prime, v_2^\prime, \ldots, v_n^\prime, s, t\}$.
We construct $\tilde{\Gamma}$ from the disjoint union of $\Gamma$ and a complete graph by adding the following edges
\begin{itemize}
	\item $v_iv_i^\prime$, for all $1 \leq i \leq n$, and
	\item $tv_i$ for each $1 \leq i \leq n$.
\end{itemize}

\begin{lemma}\label{lem:orientfull}
	Let $\Gamma$ be a graph.
	The graph $\Gamma$ admits an orientation as an oriented clique if and only if $\tilde{\Gamma}$ admits an orientation that is homomorphically full.
\end{lemma}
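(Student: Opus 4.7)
For the forward direction, I would construct a homomorphically full orientation of $\tilde{\Gamma}$ directly from a given oriented clique orientation $G$ of $\Gamma$, arranging that the whole thing is itself an oriented clique. Keep $G$ on $V(\Gamma)$; orient $\Lambda$ as the transitive tournament with linear order $s, v_1^\prime, \ldots, v_n^\prime, t$; orient each matching edge as $v_i^\prime \to v_i$; and orient each star edge as $t \to v_i$. Every non-adjacent pair in $\tilde{\Gamma}$ then sits at directed distance $2$: the pair $(s, v_i)$ via $s \to t \to v_i$; the pair $(v_j^\prime, v_i)$ with $i \neq j$ via $v_j^\prime \to t \to v_i$; and each pair $(v_i, v_j)$ non-adjacent in $\Gamma$ via the $2$-dipath supplied by the oriented clique $G$. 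By Theorem \ref{thm:everyClique}, the resulting oriented clique is homomorphically full.

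For the reverse direction, suppose $\tilde{G}$ is a homomorphically full orientation of $\tilde{\Gamma}$. I would first argue that $\tilde{G}$ is itself an oriented clique. By Theorem \ref{thm:isCore} the core of $\tilde{G}$ is an oriented clique, so it suffices to prove that no vertex of $\tilde{\Gamma}$ admits a retraction in any orientation. A short case analysis of neighbourhoods delivers this: retracting $s$ would need an image vertex in $V(\Gamma)$ adjacent to every $v_k^\prime$, yet each $v_i$ is adjacent to only $v_i^\prime$ among the primed vertices; $t$ is universal and so has no non-neighbour to retract onto; and symmetric arguments rule out each $v_k^\prime$ as well as each $v_i$ of degree at least two in $\Gamma$. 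Hence the core equals $\tilde{G}$ and $\tilde{G}$ is an oriented clique.

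It then remains to produce an oriented clique orientation of $\Gamma$. For a non-adjacent pair $v_i, v_j \in V(\Gamma)$, the common neighbours in $\tilde{\Gamma}$ are exactly $(N_\Gamma(v_i) \cap N_\Gamma(v_j)) \cup \{t\}$, so the $2$-dipath guaranteed in $\tilde{G}$ either already lies inside $V(\Gamma)$ or else passes through $t$. The \emph{main obstacle} is the latter situation: the naive restriction of $\tilde{G}$ to $V(\Gamma)$ need not be an oriented clique, as a small path example shows. My plan is to define an orientation of $\Gamma$ that re-orients the restriction according to the partition of $V(\Gamma)$ into the in- and out-neighbours of $t$ in $\tilde{G}$---orienting cross edges from the in-side to the out-side so that any common $\Gamma$-neighbour of an ``across'' pair can play the role of $t$ in supplying the required $2$-dipath. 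Homomorphic fullness of $\tilde{G}$ applied to the pairs $(s, v_i), (v_i^\prime, v_j)$, and $(v_j^\prime, v_i)$ is then needed to enforce consistency of this re-orientation on the within-side edges, and it is carrying out this propagation cleanly that I expect to form the technical heart of the argument.
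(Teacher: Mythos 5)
Your forward direction is correct and is essentially the paper's construction (the paper orients the matching edges $v_i\to v_i^\prime$ rather than $v_i^\prime\to v_i$, which changes nothing). The genuine gap is the final step of your reverse direction, and the propagation you defer cannot be carried out: a ``cross'' pair $v_i\in N^-(t)$, $v_j\in N^+(t)$ that is non-adjacent in $\Gamma$ may have \emph{no} common neighbour in $\Gamma$ at all, so no re-orientation of $\Gamma$ can place such a pair at directed distance $2$. Concretely, take $\Gamma=2K_2$ with edges $v_1v_2$ and $v_3v_4$; then $\Gamma$ has no orientation as an oriented clique ($v_1$ and $v_3$ have no common neighbour), yet $\tilde{\Gamma}$ does: orient $\Lambda$ as the transitive tournament $s\to v_1^\prime\to v_2^\prime\to v_3^\prime\to v_4^\prime\to t$ (all forward arcs), orient $v_i^\prime\to v_i$ for each $i$, orient $v_3\to t$, $v_4\to t$, $t\to v_1$, $t\to v_2$, and orient $v_1\to v_2$, $v_4\to v_3$. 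Every non-adjacent pair of $\tilde{\Gamma}$ then lies on a $2$-dipath --- the pairs $(v_a,v_b)$ with $a\in\{1,2\}$, $b\in\{3,4\}$ use $v_b\to t\to v_a$, the pairs $(s,v_3),(s,v_4),(v_i^\prime,v_3),(v_i^\prime,v_4)$ with $i<3$ or $i<4$ use $v_3^\prime$ or $v_4^\prime$ as midpoint, $(v_4^\prime,v_3)$ uses $v_4^\prime\to v_4\to v_3$, and all remaining pairs use $t$ --- so this orientation is an oriented clique and hence homomorphically full by Theorem \ref{thm:everyClique}. The implication you are trying to establish therefore fails for the construction as stated, and no amount of consistency propagation will rescue it.

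You have in fact put your finger on exactly the point where the paper's own argument is weakest: its reverse direction rests on the assertion that no $2$-dipath joining vertices of $G$ contains a vertex of $H$, justified by the claim that $\tilde{\Gamma}$ has no path of length $2$ joining a vertex of $\Gamma$ to a vertex of $\Lambda$; but $v_i,t,v_j$ is such a path between two vertices of $\Gamma$ through $t\in V(\Lambda)$, which is precisely the routing your $2K_2$-type example exploits. Separately, your opening move in the reverse direction --- concluding that $\tilde{G}$ is an oriented clique because its core is one and no single vertex ``admits a retraction'' --- needs more than a neighbourhood case analysis: by the phenomenon of Figure \ref{fig:counter}, a homomorphically full oriented graph may have an identifiable pair $u,v$ for which $\tilde{G}_{uv}$ embeds as a non-induced subgraph without any retraction existing, so ruling out retractions does not by itself rule out such pairs. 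The paper's corresponding step (no two vertices of $\tilde{\Gamma}$ are neighbourhood comparable, hence homomorphically full orientations coincide with oriented-clique orientations) is terse for the same reason.
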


\begin{proof}
	Let $\Gamma$ be a graph.
	Notice that no two vertices of $\tilde{\Gamma}$  are neighbourhood comparable.
	Therefore $\tilde{\Gamma}$ has an orientation that is a homomorphically full oriented graph if and only if it has an orientation that is an oriented clique.
	
	Suppose $\Gamma$ has an orientation as an oriented clique, $G$.
	Extend this orientation to  $\tilde{\Gamma}$ by orienting the edge between $v_i$ and $v_i^\prime$ from $v_i$ to $v_i^\prime$, $1 \leq i \leq n$; orienting $\Lambda$ to be a transitive tournament, $H$, in which $s$ has in-degree 0 and $t$ has out-degree $0$; and orienting the edges between $t$ and $v_1, v_2, \ldots, v_n$ from $t$ to $v_i$, $1 \leq i \leq n$.
	
	Since $G$ and $H$ are oriented cliques, it remains to verify that there is a directed path of length at most 2 between each vertex of $G$ and each vertex of $H - t$.  For $x \in V(H) - \{t\}$ and $1 \leq i \leq n$ there is a $2$-dipath $x,t,v_i$.  Thus this orientation of  $\tilde{\Gamma}$ is an oriented clique, and hence is a homomorphically full oriented graph.
	
	Now suppose $\tilde{\Gamma}$ has an orientation $\tilde{G}$ that is a homomorphically full oriented graph. 
	As noted above, this orientation $\tilde{G}$  is an oriented clique.
	Let $G$ and $H$ respectively be the subgraphs of $\tilde{G}$  induced by the vertex sets of $\Gamma$ and $\Lambda$.
	Since there is no path of length 2 in $\tilde{\Gamma}$ joining a vertex of $\Gamma$ and a vertex of $\Lambda$, no $2$-dipath joining vertices of $G$ contains a vertex of $H$.
	Therefore $G$ is an oriented clique.
	And so $\Gamma$ admits an orientation as an oriented clique.
\end{proof}

\begin{theorem}
	FULLORIENT is NP-complete. 
\end{theorem}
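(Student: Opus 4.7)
The plan is to establish NP-completeness in the standard two-part fashion, using Lemma~\ref{lem:orientfull} as the engine for hardness and a simple guess-and-check argument for membership in NP.

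First I would verify that FULLORIENT lies in NP. The natural certificate is an orientation $G$ of $\Gamma$ together with, for each unordered pair $\{u,v\}$ of vertices of $G$ that is neither adjacent nor joined by a $2$-dipath, an injective map $\psi_{uv}: V(G_{uv}) \to V(G)$ serving as a subgraph embedding of $G_{uv}$ into $G$. Since the number of such pairs is at most $\binom{n}{2}$, the certificate is of polynomial size, and verification amounts to checking that each $\psi_{uv}$ preserves arcs, which is polynomial. By the observation recorded in the introduction that homomorphic fullness need only be tested against the targets of elementary homomorphisms, the existence of such a certificate is equivalent to $G$ being a homomorphically full orientation of $\Gamma$.

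For NP-hardness I would reduce from OCLIQUE, which is NP-complete by Theorem~\ref{thm:oCliqueNPC}. Given an instance $\Gamma$ of OCLIQUE with $V(\Gamma)=\{v_1,\dots,v_n\}$, construct $\tilde{\Gamma}$ as described immediately before Lemma~\ref{lem:orientfull}: take the disjoint union of $\Gamma$ with a complete graph on $\{v_1',\dots,v_n',s,t\}$ and add the edges $v_iv_i'$ and $tv_i$ for $1 \leq i \leq n$. This construction has size polynomial in $|V(\Gamma)|$ and is obviously computable in polynomial time. Lemma~\ref{lem:orientfull} provides exactly the required equivalence: $\Gamma$ orients to an oriented clique if and only if $\tilde{\Gamma}$ admits a homomorphically full orientation.

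I do not expect any genuine obstacle here, since the nontrivial combinatorial content — the equivalence between existence of an oriented-clique orientation of $\Gamma$ and existence of a homomorphically full orientation of $\tilde{\Gamma}$ — has already been done in Lemma~\ref{lem:orientfull}. The only minor point worth stating explicitly in the write-up is the NP-membership argument, since at first glance one might worry that HOMFULL being GI-hard obstructs polynomial-time verification; this worry is dispelled by noting that NP permits us to guess the subgraph embeddings as part of the certificate rather than computing them.
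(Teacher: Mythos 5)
Your proof is correct and takes essentially the same route as the paper: the hardness part is exactly the paper's reduction from OCLIQUE via the gadget $\tilde{\Gamma}$ together with Lemma~\ref{lem:orientfull} and Theorem~\ref{thm:oCliqueNPC}. Your explicit NP-membership certificate (an orientation plus a subgraph embedding of $G_{uv}$ into $G$ for each elementary homomorphism, justified by the paper's observation that only elementary images need be checked) is a detail the paper leaves implicit, and it is sound.
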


\begin{proof}
	The transformation is from OCLIQUE.  
	Given an instance $\Gamma$ of OCLIQUE construct the instance $\tilde{\Gamma}$ of FULLORIENT.
	Such a construction can be carried out in polynomial time.
	The result follows from Lemma \ref{lem:orientfull} and Theorem \ref{thm:oCliqueNPC}
\end{proof}

\section{Further Remarks and Future Work} \label{sec:conclude}
The study of homomorphisms oriented graphs often goes hand-in-hand with that of $2$-edge-coloured graphs.
The literature is brimming of examples where similar looking results and methods appear for these objects.
For example, see results for homomorphisms of oriented and $2$-edge coloured planar graphs by \cite{AM98} and by \cite{RASO94}.

Emulating the results in Section \ref{sec:antiSym} for $2$-edge-coloured graphs would require extending the notion of quasi-transitivity to $2$-edge-coloured graphs.
As shown by  \cite{DM21},  the classification of those graphs that admit a $2$-edge-colouring with a property analogous to quasi-transitivity results in a classification that is much less natural than that for oriented graphs.
As such we expect discovering the analogous results for $2$-edge-coloured graphs  to require significant work.
Similarly, emulating the results in Section \ref{sec:orientations} for $2$-edge-coloured graphs will require significant work.
It it unknown if a result analogous to Theorem \ref{thm:oCliqueNPC} holds for a $2$-edge-coloured version of the problem.

The result of Theorem \ref{thm:main} assumes that both the input graph and the target graph of the homomorphism are irreflexive.
As shown by \cite{H13}, statements in this theorem corresponding to (a) through (d) are also equivalent for reflexive graphs, but not for reflexive digraphs.
Note that the definition of neighbourhood comparability differs slightly to account for the loops, as there is no assumption that adjacent vertices must have the different images under a homomorphism.
In the case of reflexive graphs, statements (e) and (f) respectively become ``$G$ has none of $C_4$, $P_4$, $2K_2$ as an induced subgraph, i.e., $G$ is a threshold graph'' and ``$\overline{G}$ is the comparability graph of a threshold order''.
The homomorphically full reflexive semi-complete digraphs are characterized in the same paper.
The complete characterization of homomorphically full reflexive and irreflexive digraphs remains open.

\bibliographystyle{abbrvnat}
\bibliography{references.bib}

\end{document}